\newtheorem{theorem}{Theorem}[section]
\newtheorem{corollary}{Corollary}[theorem]
\newtheorem{lemma}[theorem]{Lemma}   
\newtheorem{assumption}{Assumption}[section]
\title{Decentralized Langevin Dynamics over a Directed Graph}
\author{Alexander Kolesov\footnote{Moscow Institute of Physics and Technology kolesov.as@phystech.edu} and Vyacheslav Kungurtsev \footnote{Czech Technical University in Prague vyacheslav.kungurtsev@fel.cvut.cz}}
\begin{document}
\maketitle

\begin{abstract}
    The prevalence of technologies in the space of the Internet of Things and use of multi-processing computing platforms to aid in the computation required to perform learning and inference from large volumes of data has necessitated the extensive study of algorithms on \emph{decentralized} platforms. In these settings, computing nodes send and receive data across graph-structured communication links, and using a combination of local computation and consensus-seeking communication, cooperately solve a problem of interest. Recently, Langevin dynamics as a tool for high dimensional sampling and posterior Bayesian inference has been studied in the context of a decentralized operation. However, this work has been limited to undirected graphs, wherein all communication is two-sided, i.e., if node A can send data to node B, then node B can also send data to node A. We extend the state of the art in considering Langevin dynamics on directed graphs. 
\end{abstract}

\section{Introduction}
Recently, there has been a surge in the interest of using computing platforms
situated on a network, modeled as a graph with vertices and edges. Originally
termed \emph{distributed}, this has more recently transformed as being denoted
as \emph{decentralized}, to contrast with data-parallel distributed computation
methods in high performance computing. In this setting, a number of agents,
defined as vertices $\mathcal{V}$ with communication links $\mathcal{E}$ in
a fully connected graph $\mathcal{G}=(\mathcal{V},\mathcal{E})$ perform alternating sequences of steps of local computation and consensus-based communication to cooperatively solve some problem in optimization, learning, or inference. This line of research began with the seminal work on optimization in~\cite{nedic2009distributed} although precursors in network control theory exist. 

The rise of distributed algorithms can come from an underlying physical reality
of a problem incorporating information that is distributed across a network of agents that they must cooperatively solve, or
a situation arising from the contemporary age of ``big data" in statistics and machine learning. In this case one computer is unable to store the entire dataset for learning and there is a practical necessity for separating it across a set of machines. As soon as we divide data to many computers, then the connectivity structure of such computers will constitute the distributed network. In case of the absence of
a central master/server machine the network is described as decentralized. 

On undirected graphs, each communication link, modeled as an edge
in the graph $e=(v_1,v_2)\in\mathcal{E}$ for $v_1,v_2\in\mathcal{V}$ is such that communication is bi-directional, i.e., in this case $(v_2,v_1)\in\mathcal{E}$ as well. In a \emph{directed} graph, this may not in general be the case, i.e., it could be that node $v_1$ can send data to note $v_2$, but not vice versa. In general, connections can be \emph{time-varying}, meaning that nodes connect and drop out at various moments in time. This issue
of communication crashes is one of practical importance as it is important
that the optimization, inference or learning procedure be robust with respect to such occurrences.

As an illustration, let us consider the multi-agent system that contains 4 nodes depicted in Figure~\ref{fig:directednetwork}. Let the connection between the nodes at a certain time $t_{1}$ be as in Figure ~\ref{fig:directednetwork}(a). Subsequently, for some reason the connection between from first to the second agent fails, then the graph loses the directed edge between them at a certain time $t_{2}$ that is depicted in figure ~\ref{fig:directednetwork}(b). Hence, the network is time-varying directed graph.

\begin{figure}
\centering
\subfloat[at time $t_{1}$]{\includegraphics[width = 1.68in]{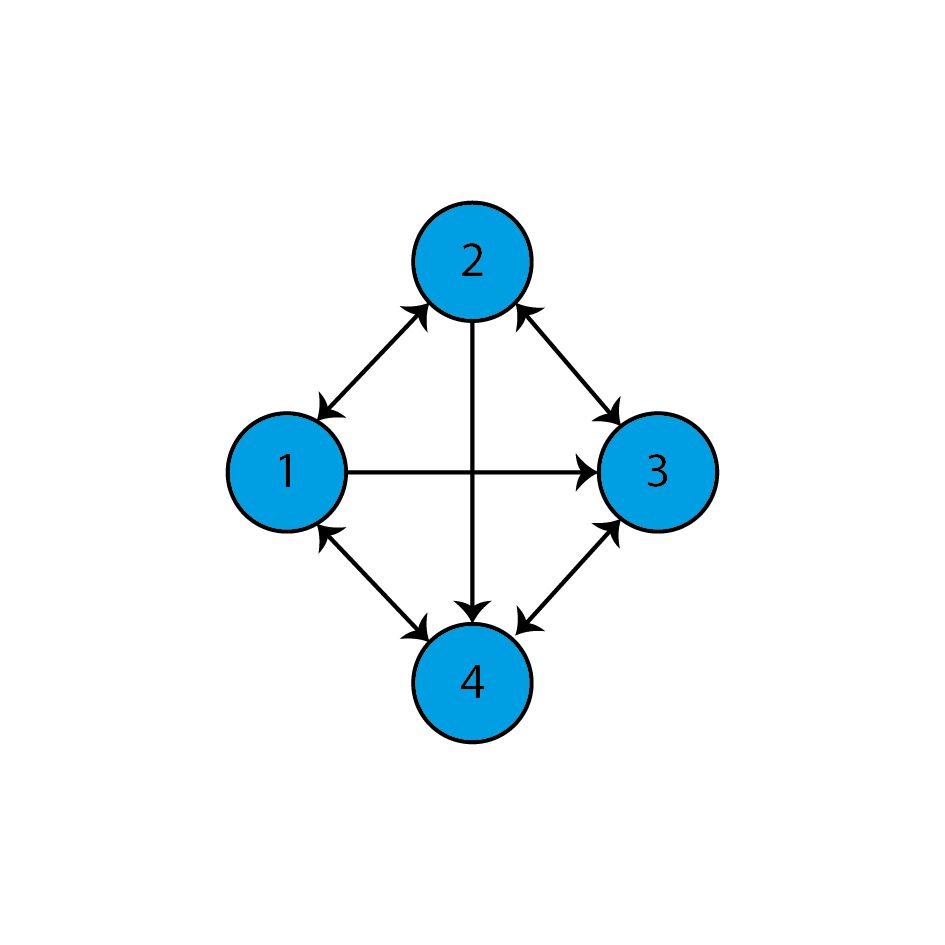}} 
\subfloat[at time $t_{2}$]{\includegraphics[width = 1.68in]{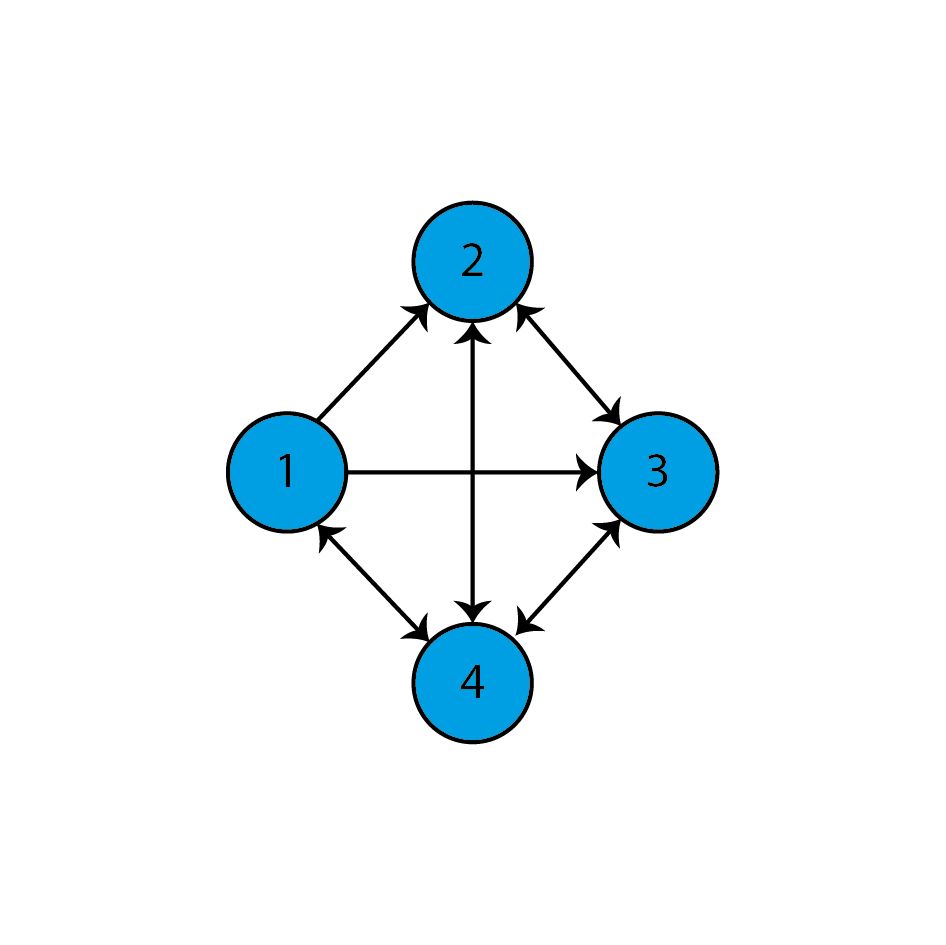}}
\caption{Time Varying Directed Network}
\label{fig:directednetwork}
\end{figure}

Having realised the structure of a decentralized multi-agent system, one can move on the problem statement. The task is to sample from a probability distribution that is known up to a normalizing constant. For instance, such a problem arises in Bayesian Inference, when we want to calculate a posterior distribution $p(x | \theta) $  for a set of parameters $x$, knowing a prior knowledge $p(x)$ and a  likelihood $p(\theta|x)$, where $\theta$ is data. In accordance with the Bayes' theorem ~\eqref{eq:Bayess formula}
\begin{equation}
    \centering
    \label{eq:Bayess formula}
    p(x|\theta) = \frac{p(\theta|x)p(x)}{\int_{X} p(\theta|x)p(x)dx},
\end{equation}
there is a constant in a denominator that constitutes the multiplication of the prior and the likelihood, integrated across the parameters. This integral normalizing constant
is typically difficult if not impossible to compute, and hence sampling methods often seek to avoid doing so.

Markov Chain Monte Carlo (MCMC) methods are able to obtain diverse samples from a target posterior distribution. The core of the techniques is to sample a proposal distribution which is then accepted or rejected. There are many variants of these proposal distributions. More recently, it was found that, for especially 
log concave potentials in general, discretizations of physically motivated stochastic differential equations are able to efficiently for distributions with a high dimension for the parameter $x$.
One of these proposals is based on Hamiltonian dynamics, which was observed at first in ~\cite{Neal2012}, while Langevin dynamics underlies another type of proposal, introduced for posterior Bayesian inference
with the seminal work~\cite{welling2011bayesian}. The overdamped Langevin diffusion equation, in particular, arrives at a stationary distribution characterized by a pdf of the form $e^{-U(x)}/Z$, where $Z$ is the normalizing
constant, when the diffusion drift term is $\nabla U(x)$. This form of potential is common for Gibbs type distributions and likelihood functions arising from exponential families.
Formally, we consider finding the stationary distribution of a pdf wherein there exists a potential given by,
\[
U(x) = \sum\limits_{i=1}^m U_i(x)
\]
where for presentation we subsume the dependence on the data $\theta$ within $U_i(\cdot)$, i.e., $U_i(x) = p(\theta_i|x)p(x)$, wherein we drop the normalization constant, and
where node $i$ knows only the strongly convex function $U_{i}$: $\mathbb{R}^{d} \rightarrow \mathbb{R}$. 

We make the following assumption on $\{U_i(\cdot)\}$.
\begin{assumption}\label{as:potentialas}
Each $U_i(x)$ is Lipschitz continuously differentiable with constant $L$, and
strongly convex with constant $\mu$. Furthermore $\sum\limits_{i=1}^m U_i(x)$
is also Lipschitz continuously differentiable and strongly convex
with, without loss of generality, the same constants.
\end{assumption}

 
\section{Background}
\subsection{The network structure}
As described above, the multi-agent system (network) constitutes directed and time-varying connections between nodes with changing in- and out-neighbours. The communication structure is modeled as a graph denoted by ${\mathcal{G}}(\mathcal{V},\mathcal{E})$. That is defined as a set of vertex $\mathcal{V}$ = \{1,...,m\} and we will use $\mathcal{E}(t)$
to label a set of edges at a certain time \textit{t} throughout the article. 

Let us make some assumptions on the network. It was shown in the related work ~\cite{NedicDirected2014}, that the property of B-strongly-connectedness is sufficient to derive different bounds on the speed of information propagation.
So we require that the sequence $\{\mathcal{G}(t)\}$ is $B$-strongly-connected. In other words, there exists  positive integer $B$ such that the graph's edge set is strongly connected for any non-negative $k$. Formally:
\[ \mathcal{E}_{B}(k) = \bigcup^{(k + 1)B - 1}_{i = kB}  \mathcal{E}(i) \]
and the graph $\mathcal{G}$ is connected.
Since the observed multi-agent system is directed and time-varying, then one has to introduce in- and out-neighboors for each node i at the current time \textit{t}. 
\[ \mathcal{N}_{i}^{in}(t) = \{j | (j,i) \in \mathcal{E}(t)\} \cup \{i\} \]

\[ \mathcal{N}_{i}^{out}(t) =\{j | (i,j) \in \mathcal{E}(t)\} \cup \{i\} \]

The authors of the article ~\cite{NedicDirected2014} note that the subgradient-push algorithm they developed for decentralized optimization requires only  knowledge of out-degree for each node i. So, one should define the out-degree of node i at time \textit{t} as:
\begin{equation}\label{eq:dout}
    d_{i}(t) = |\mathcal{N}^{out}_{i}(t)|
\end{equation}   

Now, let us introduce the mixing matrix $A(t)$. As an illustration, consider the graph that is depicted in Figure ~\ref{Circular graph}.

\begin{figure}
\centering
\subfloat[The graph]{\includegraphics[width = 1.8in]{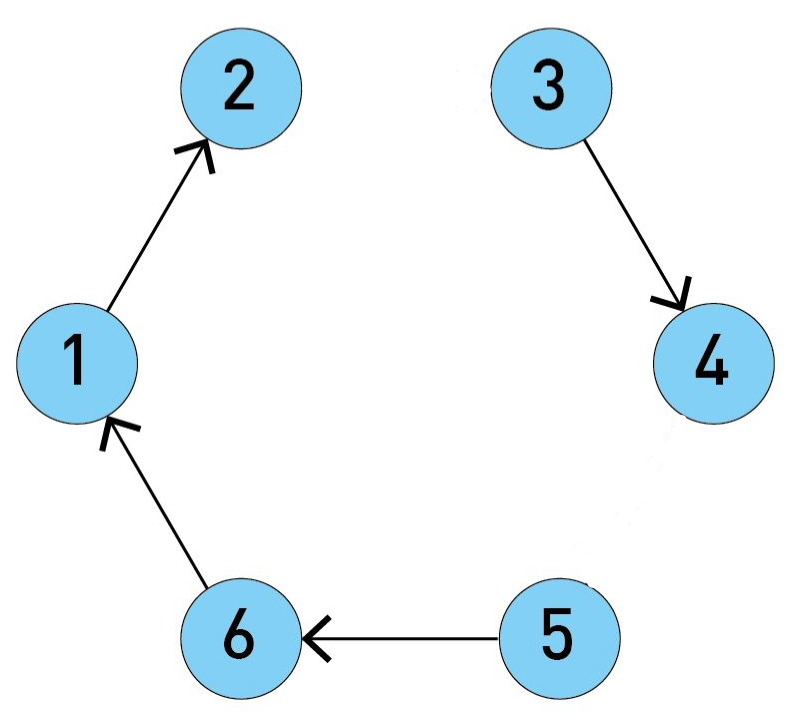}}
\caption{Almost circular graph}
\label{Circular graph}
\end{figure}

This graph is composed of 6 nodes and at this moment it is not fully connected (but is expected to complete its links over time).
Taking the first agent, one can see, that it sends his signal to the second agent and itself. Let us then set $a_{11}$ equal to $\frac{1}{2}$ and  $a_{21}$ the same. Continuing, the mixing matrix for the graph is written in ~\eqref{eq:mixing matrix}. One can notice, that the sum of elements over any column is equal to 1, so the matrix $A(t)$ is a column-stochastic. 

\begin{equation}
A(t) =  
\begin{bmatrix}
 
\frac{1}{2} & 0 & 0           & 0 & 0 & \frac{1}{2}\\
\frac{1}{2} & 1 & 0           & 0 & 0 & 0\\
0           & 0 & \frac{1}{2} & 0 & 0 & 0\\
0           & 0 & \frac{1}{2} & 1 & 0 & 0\\
0           & 0 & 0           & 0 & 1 & 0\\
0           & 0 & 0           & 0 & 0 & \frac{1}{2}
\end{bmatrix}
\label{eq:mixing matrix}
\end{equation}

We make this assumption formally,
\begin{assumption}
The mixing matrix $A(t)$ is column-stochastic for all $t$.
\end{assumption}

\subsection {Elements of spectral theory}
Let us present some basic background on the spectral theory of mixing matrices. Introduce the following constants $\lambda$ and $\delta$ associated to a graph sequence $\mathcal{G}(t)$. The value of $\delta$ corresponds to, once you continuously multiply the matrices $A(t)$, the smallest value of an entry of that product matrix. 
There is always a lower bound and in accordance with assumption V.1 in ~\cite{Rogozin}, the standard condition is,
\[ \delta \geq \frac{1}{m^{mB}}\]
One can interpret this as a lower bound on the weight any node assigns to past information from another node.
 
As for $\lambda$, for the case of fixed graphs, the variable $\lambda$ will represent the connectivity of the graph, and it is usually the second eigenvalue of the graph adjacency matrix. Usually, the better connected the graph is, the bigger $\lambda$ is, for poorly connected graphs $\lambda$ is almost zero. However, when the graph changes with time, there is no direct interpretation via the eigenvalue, because every graph in the sequence will have a different eigenvalue.Thus one can interpret $\lambda$ in the time-varying case as a lower bound for the connectivity of the time-varying graph.
\begin{equation}\label{eq:lamb}
\lambda \leq \left(1 - \frac{1}{m^{mB}}\right)^{\frac{1}{mB}}
\end{equation}



\subsection{Previous Work}
Stochastic Gradient Langevin Dynamics was introduced for posterior Bayesian inference
with the seminal work~\cite{welling2011bayesian}. 

The paper~\cite{kungurtsev2020stochastic} presented the first application of applying Langevin dynamics, and thus finding a stationary distribution, in the context of a decentralized setting of data distributed across a network. The work~\cite{gurbuzbalaban2020decentralized} extended the framework to consider the momentum based Hamiltonian Langevin dynamics. Finally~\cite{parayil2020decentralized} considered the problem relaxing the conditions for convexity of the potential (log-concavity of the distribution function) to a log-Sobolev inequality, thus permitting posterior inference for a wider class of decentralized problems.

In regards to decentralized optimization, besides the seminal work~\cite{NedicDirected2014}, there is the paper considering stochastic
gradients in~\cite{nedic2016stochastic}. For a nice recent survey on 
decentralized optimization in machine learning see~\cite{Nedic_ML}.

\subsection{Probability Distances}

In order to present our theoretical results, we must introduce some notation regarding the computation of the distance between two probability distributions. 
A transference plan $\zeta(\mu,\nu)$
of two probability measures $\mu$ and $\nu$ on $\mathcal{B}(\mathbb{R}^d)$ is itself a probability
measure on $(\mathbb{R}^d\times\mathbb{R}^d,\mathcal{B}(\mathbb{R}^d\times\mathbb{R}^d))$ that satisfies: for all measurable $A\subseteq\mathbb{R}^d$,
it holds that $\zeta(A\times\mathbb{R}^d)=\mu(A)$ and
$\zeta(\mathbb{R}^d\times A)=\nu(A)$. We denote by $\Pi(\mu,\nu)$ the set
of transference plans of $\mu$ and $\nu$.Two  
$\mathbb{R}^d$-valued random variables $(X,Y)$ is a \emph{coupling} of $\mu$
and $\nu$ if there exists a $\zeta\in\Pi(\mu,\nu)$ such that $(X,Y)$
are distributed according to $\zeta$. 
The Wasserstein distance of order two is,
\[
W_2(\mu,\nu) = \left(\inf_{\zeta\in\Pi(\mu,\nu)} 
\int_{\mathbb{R}^d\times\mathbb{R}^d} \|x-y\|^2 d\zeta(x,y)\right)^{1/2}.
\]
For all $\mu,\nu$ there exists a $\zeta^*\in\Pi(\mu,\nu)$ realizing
the $\inf$, i.e., for any coupling $(X,Y)$ distributed according
to $\zeta^*$ we have $W_2(\mu,\nu)=\mathbb{E}[\|X-Y\|^2]^{1/2}$, defined as
the \emph{optimal transference plan} and \emph{optimal coupling} associated with $W_2$.
The space $\mathcal{P}_2(\mathbb{R}^d)$ is the set of finite second
moment probability measures and together with $W_2$ is a complete separable metric
space.

We denote $\mu \ll \nu$ to mean that $\mu$ is absolutely continuous w.r.t. $\nu$. Recall the Kullback-Leibler (KL) divergence
of $\mu$ from $\nu$ is defined by,
\[
KL(\mu|\nu) = \left\{ \begin{array}{ll} \int_{\mathbb{R}^d} \frac{d\mu}{d\nu}(x)\log\left(
\frac{d\mu}{d\nu}(x)\right), & \text{if }\mu\ll \nu,\\
\infty & \text{otherwise.} \end{array}\right.
\]

\section{Algorithm}\label{s:alg}

Now, we propose an algorithm to solve the problem of sampling from the target distribution in a decentralized setting with directed graphs. As already mentioned, the perturbed push-sum protocol, which was published in the work ~\cite{NedicDirected2014} is at the heart of our developed algorithm. The aspect of the procedure seeking consensus is actually exactly the same. One should take the mixing matrix, which is set by the current graph's structure, and multiply this matrix $A(t)$ by the vector of coordinates $X$ which is a stack comrpomising of concatenated vectors $\{X_{(i)}\}$, where each $X_{(i)}$
is in turn a stack of components of coordinate on $i$'th node.
As soon as we compute the consensus step, then we can compute the balancing vector $Y_{(i)}$ for each agent. The (stochastic) gradient is evaluated at another vector $Z_{(i)}$ that is a balanced weighted average of $X_{(i)}$ and its neighbors, and the update, together with the Gaussian noise, is applied to $X_{(i)}$. 

The set of quantities available and computed by each agent is,
\[
\left\{ W_{(i)}(t),y_{(i)}(t), Z_{(i)}(t), X_{(i)}(t)\right\} 
\]
where aside from the scalar $y_{(i)}(t)$, all other quantities of $d$-dimensional vectors.

We now present the specific Algorithm, as inspired by the merging of the perturbed push sum approach given in~\cite{NedicDirected2014} and the unadjusted Langevin algorithm, as described for instance in~\cite{ULAMouline}.

The Algorithm, from the perspective of agent $i$, is defined as follows. To begin with, we set, for all $i$, the quantities,
\[
\left\{ W_{(i)}(0), Z_{(i)}(0), X_{(i)}(0)\right\} 
\]
to zero and $y_{(i)}(0)=1$. Recall that at each iteration $d_i(t)$ is defined by~\eqref{eq:dout}. Each agent $i$ iterates the set of equations~\eqref{eq:langdynamicsout}
\begin{equation}\label{eq:langdynamicsout}
\begin{cases}
\begin{array}{l}
W_{(i)}(t + 1) = \sum\limits_{j\in \mathcal{N}_{i}^{in}(t)} \frac{X_{(j)}(t)}{d_j(t)} \\
Y_{(i)}(t + 1) = \sum\limits_{j\in \mathcal{N}_{i}^{in}(t)} \frac{Y_{(j)}(t)}{d_j(t)} \\ 
Z_{(i)}(t + 1) = \frac{W_{(i)}(t + 1)}{Y_{(i)}(t + 1)} 
\\
X_{(i)}(t + 1) =  \sum\limits_{j\in \mathcal{N}_{i}^{in}(t)} \frac{X_{(j)}(t)}{d_j(t)}\\ \qquad -\alpha(t+1) \nabla U_{(i)}(Z_{(i)}(t + 1))) \\ \qquad  + \sqrt{2 \alpha(t+1)}B_{(i)} (t+1)
\end{array}
\end{cases}
\end{equation}
where we denote $B_{(i)}(t+1) = \sqrt{\alpha(t+1)}\xi_{(i)}(t+1)+R_{(i)}(t+1)$, with $\xi(t+1)$ is a zero bias bounded variance stochastic gradient error and $R_{(i)}(t+1)$ is an isotropic Gaussian random variable. Let us set $\sqrt{\alpha(t+1)}\le 1/\sigma^2:=1/ \mathbb E \|\xi\|$ so that the standard deviation associated with $\sqrt{2\alpha_t}B(t+1)$ is always less than 2.


Let us now consider the iterations on the full stack of vectors $\{X,Y,Z\}$
using the mixing matrix $A$, 
\begin{equation}\label{eq: langmatrices_almost_discret}
\begin{cases}
\begin{array}{l}
Y(t + 1) = A(t)Y(t)\\
Z(t + 1)  = {A(t)X(t)}\oslash{A(t)Y(t)}\\
X(t + 1) = A(t)X(t) - \alpha(t + 1) \nabla \tilde U(Z(t + 1)) \\+ \sqrt{ 2 \alpha(t + 1) }B_{t + 1}
\end{array}
\end{cases}
\end{equation}
where we are defining,
\[
\nabla \tilde U(Z) = \begin{pmatrix} \nabla U_1(Z_{(1)}) \\ 
... \\ \nabla U_i(Z_{(i)}) \\ .,. \\ \nabla U_m(Z_{(m)}) \end{pmatrix}
\]
and we abuse notation with letting $[X\oslash Y]_{(i)}=X_{(i)}/y_{(i)}$

In the theory (and of course the numerical implementation), we shall consider
the discretization, however, out of mathematical interest, we can note that,
considering the form of the Euler-Maruyama (EM) discretization, writing $A(t)X(t) = A(t)X(t) + X(t) - X(t) = (A(t) - I)X(t) + X(t))$, we can notice, out of mathematical interest, that this corresponds to the EM discretization of the dynamics given by (where we now overload $t$ to be continuous),
\[ 
\begin{array}{l}
Z_t = (A_t X_t)\oslash (A_t Y_t) \\ 
dY_t = (A_t - I)Y_t \\
dX_t = (A_t-I)X_t-\alpha_{t}\nabla \tilde U(Z_t)+\sqrt{2\alpha_t} dB_t
\end{array}
\]
where $dB_t$ is a Brownian motion term and clearly the discretization is with a step-size of one. Unlike other works in Langevin dynamics, we do not analyze the SDE directly but only its discretization, however.

\section{Theoretical Results}
We present the sequence of Lemmas and our final convergence result
in measure for the Algorithm defined in Section~\ref{s:alg}. The proofs of the
statements are left to the Supplementary Material.

First we will require a bound in expectation on the gradient vectors,
which we derive by deriving a bound on the expectation of the norm of
the vectors on which they are evaluated, $Z_{(i)}(t)$.
\begin{lemma}\label{lem:helptobound}
    It holds that there exists some compact set $\mathcal{W}$ such that for all $i$
    \[
    \mathbb{E}\left\|\frac{X_{(i)}(t+1)}{Y_{(i)}(t+1)}\right\|\le \left\{\begin{array}{lr} \|Z_{(i)}(t+1)\| &
    \text{if } Z_{(i)}(t+1)\in\mathcal{W} \\
    R & \text{otherwise}\end{array}\right.
    \]
    with $R$ depending on $\mathcal{W}$ and problem constants.
\end{lemma}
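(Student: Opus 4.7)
Substituting $W_{(i)}(t+1)=Y_{(i)}(t+1)Z_{(i)}(t+1)$ into the $X$-update of~\eqref{eq:langdynamicsout} and dividing by the scalar $Y_{(i)}(t+1)$ yields the identity
\begin{equation*}
\frac{X_{(i)}(t+1)}{Y_{(i)}(t+1)} \;=\; Z_{(i)}(t+1) \;-\; \frac{\alpha(t+1)\,\nabla U_i(Z_{(i)}(t+1))}{Y_{(i)}(t+1)} \;+\; \frac{\sqrt{2\alpha(t+1)}\,B_{(i)}(t+1)}{Y_{(i)}(t+1)}.
\end{equation*}
Taking norms, taking expectations, and applying the triangle inequality reduces the task to bounding the three summands on the right-hand side. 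Because $Y(t+1)=A(t)Y(t)$ with $Y(0)=\mathbf{1}$ and every entry of a sufficiently long product of the column-stochastic $A(\cdot)$'s is bounded below by $\delta$ (by $B$-strong-connectedness and Assumption~2), the scalar $Y_{(i)}(t+1)$ admits a deterministic positive lower bound $y_{\min}$. The step-size cap $\sqrt{\alpha(t+1)}\le 1/\sigma^2$, combined with the zero-mean bounded-variance $\xi_{(i)}$ and the isotropic Gaussian $R_{(i)}$, then gives a uniform bound $c_N$ on $\sqrt{2\alpha(t+1)}\,\mathbb{E}\|B_{(i)}(t+1)\|/Y_{(i)}(t+1)$.

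With those deterministic bounds in place, I analyze the ``effective gradient step'' $\phi(z):=z-(\alpha(t+1)/Y_{(i)}(t+1))\nabla U_i(z)$. Assumption~\ref{as:potentialas} makes $U_i$ $\mu$-strongly convex with $L$-Lipschitz gradient and with a unique minimizer $z_i^{\star}$; the standard co-coercivity bound---combined with the lower bound on $Y_{(i)}(t+1)$, which keeps the effective step size in the stable regime---yields $\|\phi(z)-z_i^{\star}\|\le(1-c)\|z-z_i^{\star}\|$ for some $c>0$. Taking norms, using $\|\phi(z)\|\le\|\phi(z)-z_i^{\star}\|+\|z_i^{\star}\|$, and adding the noise bound gives
\[
\mathbb{E}\!\left\|\frac{X_{(i)}(t+1)}{Y_{(i)}(t+1)}\right\| \;\le\; (1-c)\,\|Z_{(i)}(t+1)-z_i^{\star}\|+\|z_i^{\star}\|+c_N.
\]
Whenever $\|Z_{(i)}(t+1)\|$ exceeds a threshold $\rho$ depending on $\|z_i^{\star}\|$, $c$, and $c_N$, the right-hand side is at most $\|Z_{(i)}(t+1)\|$; whenever $\|Z_{(i)}(t+1)\|$ is below $\rho$, the right-hand side is bounded by an absolute constant $R$ depending only on $\rho$, $L$, $\mu$, $\delta$, and $\sigma^2$. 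The required compact set $\mathcal{W}$ is then the closed ball determined by the threshold $\rho$ (matching the two cases to the convention of the lemma).

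The principal obstacle is the coupling introduced by dividing through by the random scalar $Y_{(i)}(t+1)$, which intermingles the mixing dynamics with the strongly convex drift and with the noise amplification. Once this coupling is neutralized by the deterministic lower bound $Y_{(i)}(t+1)\ge y_{\min}$ coming from the $\delta$-bound on mixing-matrix products, the remainder of the argument reduces to a classical single-step contraction calculation for gradient descent on a strongly convex potential.
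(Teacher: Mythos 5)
Your proof is correct in substance but takes a genuinely different route from the paper's. The paper works with the \emph{second} moment: it expands $\|X_{(i)}(t+1)/Y_{(i)}(t+1)\|^2$, uses the strong-convexity inequality $\nabla U_i(z)^T z \ge U_i(z)-U_i(0)+\tfrac{\mu}{2}\|z\|^2$ anchored at the origin so that the zero-mean noise cross term vanishes under conditional expectation, and then takes $\mathcal{W}$ to be a sublevel set of $U_i$ of the form $\{z : U_i(z)\le U_i(0)+\tfrac{2\mu}{L^2}\|\nabla U_i(0)\|^2+2\sigma^2\}$ (compact by strong convexity), mirroring Lemma 3 of the Nedi\'c--Olshevsky stochastic gradient-push analysis. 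You instead run a \emph{first}-moment argument: a single-step contraction of the gradient-descent map $\phi(z)=z-(\alpha/Y)\nabla U_i(z)$ toward the minimizer $z_i^\star$, with $\mathcal{W}$ a Euclidean ball of radius $\rho$. Your version is more elementary and makes the geometry transparent; the paper's version handles the noise slightly more sharply (the cross term cancels exactly rather than being absorbed via the triangle inequality), though for this lemma only a constant bound is needed, so nothing is lost. Two small points. First, both arguments require the effective step $\alpha(t+1)/Y_{(i)}(t+1)$ to stay in the contraction regime, and since $Y_{(i)}(t+1)$ is only bounded below by $\delta$ (which can be as small as $m^{-mB}$), the step-size restriction you invoke implicitly should be stated as an explicit condition of the form $\alpha(t+1)\lesssim \delta/L$; the paper likewise enforces $\alpha(t+1)\le \mu/(4L^2)$ at the corresponding point. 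Second, your case split (contraction when $\|Z_{(i)}(t+1)\|$ is large, constant bound $R$ when it is small) agrees with the appendix version of the lemma, in which the $\|Z_{(i)}(t+1)\|$ bound applies when $Z_{(i)}(t+1)\notin\mathcal{W}$; the main-text statement you were given has the two cases interchanged, which appears to be a typo in the paper, and you resolved it the right way.
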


\begin{lemma}\label{lem:gradbound}
    It holds that there exists a $C$ such that,
    \[
    \mathbb{E}\|\nabla U_i(Z_{(i)}(t+1))\|\le C
    \]
\end{lemma}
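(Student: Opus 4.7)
\textbf{Proof plan for Lemma \ref{lem:gradbound}.} The plan is to reduce the bound on $\mathbb{E}\|\nabla U_i(Z_{(i)}(t+1))\|$ to a bound on $\mathbb{E}\|Z_{(i)}(t+1)\|$ via Lipschitz continuity of $\nabla U_i$, and then to extract the latter by rewriting $Z_{(i)}(t+1)$ as a convex combination of the ratios $X_{(j)}(t)/Y_{(j)}(t)$ that Lemma~\ref{lem:helptobound} controls.

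First, Assumption~\ref{as:potentialas} gives that each $\nabla U_i$ is $L$-Lipschitz, whence
\[
\|\nabla U_i(Z_{(i)}(t+1))\| \le \|\nabla U_i(0)\| + L\,\|Z_{(i)}(t+1)\|,
\]
so it suffices to produce a bound on $\mathbb{E}\|Z_{(i)}(t+1)\|$ that is uniform in $t$ and $i$ (maxing over the finitely many nodes handles the data-dependent term $\|\nabla U_i(0)\|$).

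Second, using~\eqref{eq:langdynamicsout} I would rewrite
\[
Z_{(i)}(t+1) = \frac{\sum_{j\in\mathcal{N}_i^{in}(t)} X_{(j)}(t)/d_j(t)}{\sum_{k\in\mathcal{N}_i^{in}(t)} Y_{(k)}(t)/d_k(t)} = \sum_{j} \lambda_j^{(i,t)}\,\frac{X_{(j)}(t)}{Y_{(j)}(t)},
\]
with weights $\lambda_j^{(i,t)} := (Y_{(j)}(t)/d_j(t))\big/\sum_k(Y_{(k)}(t)/d_k(t)) \ge 0$ summing to $1$. Positivity of $Y_{(j)}(t)$, needed for this rewrite to be well defined, follows by an easy induction from the initialization $Y_{(j)}(0)=1$ together with nonnegativity of the entries of $A(t)$, a standard feature of the push-sum construction. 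The convex-combination structure then yields
\[
\|Z_{(i)}(t+1)\| \le \max_j \|X_{(j)}(t)/Y_{(j)}(t)\| \le \sum_j \|X_{(j)}(t)/Y_{(j)}(t)\|,
\]
and taking expectations and applying Lemma~\ref{lem:helptobound} termwise (with the time index shifted from $t+1$ down to $t$) bounds each summand by $\max(\sup_{Z\in\mathcal{W}}\|Z\|,R)$, giving $\mathbb{E}\|Z_{(i)}(t+1)\| \le m\max(\sup_{Z\in\mathcal{W}}\|Z\|,R)$ uniformly in $t$ and $i$. The constant $C := \max_i\|\nabla U_i(0)\| + Lm\max(\sup_{Z\in\mathcal{W}}\|Z\|,R)$ then works.

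The main obstacle I expect is the correct reading of Lemma~\ref{lem:helptobound}: its right-hand side in the favorable case is the random quantity $\|Z_{(j)}(t)\|$, so to obtain a deterministic uniform bound one has to exploit compactness of $\mathcal{W}$ to replace $\|Z_{(j)}(t)\|$ by $\sup_{Z\in\mathcal{W}}\|Z\|$ on $\{Z_{(j)}(t)\in\mathcal{W}\}$ and by $R$ on its complement. A secondary technical point is the positivity of $Y_{(j)}(t)$ required for the convex-combination rewrite; this is routine given the push-sum dynamics but should be stated explicitly. If instead the statement of Lemma~\ref{lem:helptobound} were meant conditionally on $Z_{(j)}(t)$, the outer expectation would need to be passed through the maximum in the previous display, and one would argue by induction on $t$ using the same compactness cutoff.
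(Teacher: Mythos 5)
Your plan follows essentially the same route as the paper's proof: reduce, via Lipschitz continuity of $\nabla U_i$ from Assumption~\ref{as:potentialas}, to a uniform bound on $\mathbb{E}\|Z_{(i)}(t+1)\|$, and obtain that bound from the fact that $Z_{(i)}(t+1)$ is a convex combination of the ratios $X_{(j)}(t)/Y_{(j)}(t)$, each controlled by Lemma~\ref{lem:helptobound}. The one substantive issue is that your primary argument leans on the main-text statement of Lemma~\ref{lem:helptobound}, whose two cases are swapped relative to the appendix version and to what its proof actually establishes: the bound $\|Z_{(i)}(t+1)\|$ is the \emph{non-expansion} estimate valid when $Z_{(i)}(t+1)\notin\mathcal{W}$, and the bound $R$ is what holds when $Z_{(i)}(t+1)\in\mathcal{W}$. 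Under that (correct) reading, your step of replacing $\|Z_{(j)}(t)\|$ by $\sup_{Z\in\mathcal{W}}\|Z\|$ is unavailable --- the $\|Z\|$ branch applies precisely where $Z$ is unbounded --- so the direct termwise bound $\max(\sup_{Z\in\mathcal{W}}\|Z\|,R)$ does not follow in one shot. What closes the argument is exactly the contingency you relegate to your final sentence: combine the convex-combination inequality with Lemma~\ref{lem:helptobound} to get the conditional recursion $\max_i\mathbb{E}\|Z_{(i)}(t+1)\|\le\max\{\max_i\|Z_{(i)}(t)\|,\,R\}$ and then induct on $t$ using iterated expectations; this is what the paper does, by appeal to the argument of Theorem 2 in~\cite{nedic2016stochastic}. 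So promote that fallback to the main line of the proof. Two minor points in your favor: bounding $\|\nabla U_i(Z)\|$ against $\|\nabla U_i(0)\|$ is cleaner than the paper's use of the minimizer (the paper conflates the minimizers of $U$ and of $U_i$ when asserting the gradient vanishes there), and your explicit verification that $Y_{(j)}(t)>0$, needed for the convex-combination rewrite, is a detail the paper leaves implicit.
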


The next statement is the same as Corollary 1 in~\cite{nedic2016stochastic}, where we recall that $\lambda$ is a graph-structure related constant satisfying~\eqref{eq:lamb} which implies that $\lambda<1$,
\begin{lemma}
    It holds that,
    \[
    \begin{array}{l}
    \left\|Z_{(i)}(t+1)-\frac{\sum\limits_{i=1}^m X_{(i)}(t)}{m}\right\|\\ \qquad \le 
    \frac{8}{\delta}\left(\lambda^t\sum\limits_{i=1}^m\|X_{(i)}(0)\|_1+
    \sum\limits_{s=1}^t \lambda^{t-s}\sum\limits_{i=1}^m
    \left\| e_i(s)\right\|\right)
    \end{array}
    \]
    where,
    \[
    e_i(s) = \alpha(s+1) \nabla U_{i}(Z_{(i)}(s + 1)))  + \sqrt{2 \alpha(s+1)}B_{(i)}(s+1)
    \]
\end{lemma}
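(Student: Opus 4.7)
The plan is to recognize the algorithm's update for $X(t)$ as an instance of the perturbed push-sum protocol of~\cite{NedicDirected2014, nedic2016stochastic} and then directly invoke Corollary~1 of~\cite{nedic2016stochastic}, of which the present claim is a specialization. Rewriting the stack update from~\eqref{eq: langmatrices_almost_discret} in the form $X(t+1) = A(t)X(t) - \epsilon(t+1)$ with per-node perturbation
\[
\epsilon_{(i)}(t+1) = \alpha(t+1)\nabla U_i(Z_{(i)}(t+1)) - \sqrt{2\alpha(t+1)}\,B_{(i)}(t+1),
\]
we immediately have $\|\epsilon_{(i)}(s)\| = \|e_i(s)\|$ in the notation of the lemma. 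The companion variable satisfies the unperturbed recursion $Y(t+1) = A(t)Y(t)$, and $Z_{(i)}(t+1)$ is the component-wise ratio $[A(t)X(t)]_{(i)}/[A(t)Y(t)]_{(i)}$, which is precisely the push-sum output to which the cited corollary applies.

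I would then verify the structural hypotheses of that corollary: $A(t)$ is column-stochastic by Assumption~2, the sequence $\{\mathcal{G}(t)\}$ is $B$-strongly connected, and the scalars $\delta$ and $\lambda$ are the spectral quantities defined earlier, with $\lambda$ satisfying~\eqref{eq:lamb}. Under these conditions, the standard perturbed push-sum estimate yields the pathwise bound
\[
\left\|Z_{(i)}(t+1) - \frac{1}{m}\sum_{j=1}^m X_{(j)}(t)\right\| \le \frac{8}{\delta}\left(\lambda^t \sum_{j=1}^m \|X_{(j)}(0)\|_1 + \sum_{s=1}^t \lambda^{t-s}\sum_{j=1}^m \|\epsilon_{(j)}(s)\|\right),
\]
which is exactly the desired inequality once one identifies $\|\epsilon_{(j)}(s)\|$ with $\|e_j(s)\|$.

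One point worth emphasizing is that this is a deterministic, pathwise statement: it holds sample by sample for any realization of the stochastic gradient noise and the Brownian increments, and in particular does not require the moment bound of the preceding lemma. The entire content is algebraic, resting on the geometric decay of the backward products $A(t)A(t-1)\cdots A(s)$ toward their rank-one limits, together with the fact that the $Y$-recursion preserves positivity and keeps the denominators uniformly bounded below.

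The only mildly delicate point is tracking the constant $8/\delta$ and the decay rate $\lambda^{t-s}$ through the standard estimate $\|A(t)A(t-1)\cdots A(s) - \tfrac{1}{m}\mathbf{1}\phi(s)^\top\|_\infty \le C\lambda^{t-s}$ for column-stochastic products with absolute probability vector $\phi(s)$, combined with the uniform lower bound $\min_j \phi_j(s) \ge \delta$ that justifies the division by $Y_{(i)}(t+1)$. Since these quantitative estimates are produced verbatim in the proof of Corollary~1 of~\cite{nedic2016stochastic} under hypotheses identical to ours, the proof reduces to citing that corollary and rewriting its conclusion in our notation.
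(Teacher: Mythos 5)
Your proof takes essentially the same route as the paper, whose entire argument is to cite the perturbed push-sum estimate (Lemma 1 of \cite{NedicDirected2014}, equivalently Corollary 1 of \cite{nedic2016stochastic}) after identifying the $X$-update as a perturbed push-sum iteration; your write-up is if anything more explicit in checking the hypotheses (column-stochasticity, $B$-strong connectivity, the roles of $\delta$ and $\lambda$) and in observing that the bound is pathwise. The only quibble is the sign on the noise term in your $\epsilon_{(i)}$ versus the paper's $e_i(s)$ (the update subtracts the gradient term but adds the noise), a discrepancy already present in the paper's own definition of $e_i$ and immaterial downstream since only $\|e_i(s)\|\le\alpha(s+1)\|\nabla U_i(Z_{(i)}(s+1))\|+\sqrt{2\alpha(s+1)}\|B_{(i)}(s+1)\|$ is ever used.
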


This together with Lemma~\ref{lem:gradbound} allows us to prove the following
bound on the running sum of the consensus error.
\begin{lemma}\label{lem:conserror}
\[
\begin{array}{l}
    \mathbb{E}\left[\sum\limits_{t=1}^\tau \left\|Z_{(i)}(t+1)-\frac{\sum\limits_{i=1}^m X_{(i)}(t)}{m}\right\| \right] \\
    \qquad \le \frac{8}{\delta}\frac{\lambda}{1-\lambda}\sum\limits_{i=1}^m 
    \|X_{(i)}(0)\|_1+\frac{8}{\delta}\frac{Dm}{1-\lambda} (1+\sqrt{\tau})
\end{array}
\]
\end{lemma}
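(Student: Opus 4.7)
The plan is to start from the per-iterate bound of the preceding lemma, take expectations, sum over $t=1,\ldots,\tau$, and then exchange the order of summation so that each $\mathbb{E}\|e_i(s)\|$ is weighted by only a geometric tail in $\lambda$. Concretely, applying expectation term-by-term, the initial-condition contribution is
\[
\sum_{t=1}^\tau \lambda^t \sum_{i=1}^m \|X_{(i)}(0)\|_1 \le \frac{\lambda}{1-\lambda}\sum_{i=1}^m\|X_{(i)}(0)\|_1,
\]
which immediately yields the first term of the claimed bound after multiplying by $8/\delta$.

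For the residual term I would split $e_i(s)$ into its drift and noise parts. For the drift, Lemma \ref{lem:gradbound} gives $\mathbb{E}\|\nabla U_i(Z_{(i)}(s+1))\| \le C$, so the drift contributes at most $C\alpha(s+1)$. For the noise, the assumption that the standard deviation of $\sqrt{2\alpha(s+1)}B_{(i)}(s+1)$ is bounded above by $2$ (together with the decomposition $B_{(i)} = \sqrt{\alpha}\xi + R$, where the isotropic Gaussian contributes a scale $\sqrt{\alpha}$) yields $\mathbb{E}\|\sqrt{2\alpha(s+1)}B_{(i)}(s+1)\| \le C'\sqrt{\alpha(s+1)}$ by Jensen's inequality. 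Thus
\[
\mathbb{E}\|e_i(s)\| \le C\alpha(s+1) + C'\sqrt{\alpha(s+1)}.
\]

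For the remaining double sum I would swap the order of summation,
\[
\sum_{t=1}^\tau \sum_{s=1}^t \lambda^{t-s}\mathbb{E}\|e_i(s)\| = \sum_{s=1}^\tau \mathbb{E}\|e_i(s)\|\sum_{t=s}^\tau \lambda^{t-s} \le \frac{1}{1-\lambda}\sum_{s=1}^\tau \mathbb{E}\|e_i(s)\|,
\]
and multiply by $m$ (coming from $\sum_{i=1}^m$) to account for the inner sum over agents. The remaining task is to show $\sum_{s=1}^\tau \mathbb{E}\|e_i(s)\| \le D(1+\sqrt{\tau})$ for a constant $D$. Using a decreasing step size of the form $\alpha(s) \le c/s$ (consistent with the boundedness condition $\sqrt{\alpha(s+1)}\le 1/\sigma^2$), one gets $\sum_s \alpha(s+1) = O(\log\tau)$ and $\sum_s \sqrt{\alpha(s+1)} = O(\sqrt{\tau})$; absorbing the logarithmic term into $D(1+\sqrt{\tau})$ completes the argument after multiplying through by $8/\delta$.

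The main obstacle is the third step: controlling the noise contribution so that its running sum grows only like $\sqrt{\tau}$. The isotropic Gaussian piece scales as $\sqrt{\alpha}$ (rather than $\alpha$), so it dominates the rate, and obtaining $\sqrt{\tau}$ rather than $\tau$ is what forces the decreasing step-size schedule. Care is also needed because the noise term, $\xi$, appears inside $B_{(i)}$ as a stochastic-gradient error, so the bound on $\mathbb{E}\|\xi\|$ must be invoked together with the assumption on $\sqrt{\alpha(s+1)}$ to keep the variance uniformly controlled.
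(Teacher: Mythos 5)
Your argument is correct and follows essentially the same route as the paper, which simply defers to Corollary~2 of the cited Nedi\'c--Olshevsky reference while noting that the $\sqrt{\alpha}$ scaling of the noise turns the usual $\log\tau$ into $\sqrt{\tau}$ --- precisely the point you make explicit by splitting $e_i(s)$ into drift ($O(\alpha(s))$, summing to $O(\log\tau)$) and noise ($O(\sqrt{\alpha(s)})$, summing to $O(\sqrt{\tau})$) after exchanging the order of summation and bounding the geometric tails. Your write-up supplies the details the paper leaves implicit, and the steps (geometric series for the initial condition, Lemma~\ref{lem:gradbound} for the drift, the variance bound on $B_{(i)}$ for the noise, and the $\alpha(s)\sim 1/s$ schedule) all check out.
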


Now define,
\[
\bar{X}(t) = \frac{\sum\limits_{i=1}^m X_{(i)}(t)}{m}
\]
we have, by the column-stochasticity of $A(t)$,
\begin{equation}\label{eq:avgupdate}
\begin{array}{l}
\bar{X}(t+1)=\bar{X}(t) - \frac{\alpha(t+1)}{m}\sum\limits_{i=1}^m \nabla U(\bar{X}(t)) \\ \qquad +  \frac{\alpha(t+1)}{m}\sum\limits_{i=1}^m \left(\nabla U(\bar{X}(t))-U(Z_{(i)}(t+1))\right)+\bar{B}(t+1)
\end{array}
\end{equation}
Lemma~\ref{lem:conserror} implies the following,
\begin{corollary}\label{cor:cons}
Given any $\gamma>0$, there exists $C_\gamma>0$ such that,
\[
E\|\bar{X}(t))-Z_{(i)}(t+1)\| \le \frac{C_\gamma}{t^{1/2-\gamma}}
\]
\end{corollary}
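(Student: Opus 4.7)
The plan is to return to the pointwise bound in the lemma immediately preceding Lemma~\ref{lem:conserror}, which reads $\|Z_{(i)}(t+1) - \bar{X}(t)\| \le \frac{8}{\delta}\bigl(\lambda^t\sum_i\|X_{(i)}(0)\|_1 + \sum_{s=1}^t\lambda^{t-s}\sum_i\|e_i(s)\|\bigr)$, and to bound its expectation term by term. While the corollary is phrased as a consequence of Lemma~\ref{lem:conserror}, the two statements use essentially the same ingredients, and the pointwise rate is most cleanly obtained from the pointwise convolution bound rather than by trying to extract a pointwise statement from a cumulative estimate.

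First, I would bound $\mathbb{E}\|e_i(s)\|$. Writing $e_i(s) = \alpha(s+1)\nabla U_i(Z_{(i)}(s+1)) + \sqrt{2\alpha(s+1)}B_{(i)}(s+1)$, Lemma~\ref{lem:gradbound} gives $\mathbb{E}\|\nabla U_i(Z_{(i)}(s+1))\| \le C$. Expanding $B_{(i)}(s+1) = \sqrt{\alpha(s+1)}\xi_{(i)}(s+1) + R_{(i)}(s+1)$, the bounded first moment of $\xi$ and the finite mean norm of the isotropic Gaussian $R$ yield $\mathbb{E}\|B_{(i)}(s+1)\| = O(1)$, hence $\mathbb{E}\|e_i(s)\| \le c_1\alpha(s+1) + c_2\sqrt{\alpha(s+1)}$. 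For a step size $\alpha(s)\sim s^{-(1-2\gamma)}$, which is consistent with the $O(\sqrt{\tau})$ summability used in Lemma~\ref{lem:conserror}, this yields $\mathbb{E}\|e_i(s)\| = O(s^{-(1/2-\gamma)})$.

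Next, I would split the convolution $\sum_{s=1}^t \lambda^{t-s}\mathbb{E}\|e_i(s)\|$ at $s=t/2$. On the tail block $s\in(t/2,t]$, bounding $s^{-(1/2-\gamma)} \le (t/2)^{-(1/2-\gamma)}$ and $\sum_s \lambda^{t-s} \le 1/(1-\lambda)$ gives a contribution of order $t^{-(1/2-\gamma)}/(1-\lambda)$. On the head block $s\in[1,t/2]$, using $\lambda^{t-s}\le\lambda^{t/2}$ and $\sum_{s\le t/2} s^{-(1/2-\gamma)} = O(t^{1/2+\gamma})$, the head contribution is $O(\lambda^{t/2}\,t^{1/2+\gamma})$, which decays faster than any polynomial in $t$ and is absorbed into the leading constant, as is the initial term $\lambda^t\sum_i\|X_{(i)}(0)\|_1$. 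Summing these pieces yields $\mathbb{E}\|\bar{X}(t) - Z_{(i)}(t+1)\| \le C_\gamma t^{-(1/2-\gamma)}$ for a suitable $C_\gamma$ independent of $t$.

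The main obstacle is the interplay between the step-size schedule and the two desired rates: the $O(\sqrt{\tau})$ cumulative rate of Lemma~\ref{lem:conserror} and the $O(t^{-(1/2-\gamma)})$ pointwise rate here must hold simultaneously. The arbitrary slack $\gamma>0$ is exactly what relaxes the knife-edge choice $\alpha(s)\sim 1/s$ and allows the exponentially decaying head block and initial condition to be absorbed uniformly into the leading constant; it also disconnects the argument from any sharp-constant tracking through the Gaussian and stochastic-gradient contributions in $B_{(i)}$.
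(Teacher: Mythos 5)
Your proof is correct, and it takes a different (and in fact sounder) route than the one the paper gestures at. The paper offers no actual proof of this corollary: it simply asserts that Lemma~\ref{lem:conserror} ``implies'' it, i.e., that the cumulative bound $\sum_{t=1}^{\tau}\mathbb{E}\|Z_{(i)}(t+1)-\bar{X}(t)\|=O(\sqrt{\tau})$ yields the pointwise rate $O(t^{-(1/2-\gamma)})$. That inference is not valid in general --- a sequence whose partial sums grow like $\sqrt{\tau}$ can still have terms of order one along a sparse subsequence --- so your decision to return to the pointwise convolution bound of the preceding lemma, take expectations term by term using Lemma~\ref{lem:gradbound} and the moment bounds on $\xi$ and $R$, and then split the geometric convolution $\sum_{s\le t}\lambda^{t-s}\mathbb{E}\|e_i(s)\|$ at $s=t/2$ is exactly the right repair; it gives a clean $O\bigl(t^{-(1/2-\gamma)}\bigr)$ bound with the initial condition and the head block absorbed as super-polynomially small terms. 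Two small remarks. First, your hypothesized schedule $\alpha(s)\sim s^{-(1-2\gamma)}$ is not the one the paper ultimately uses in Theorem~\ref{th:conv}, which takes $\alpha(t)=\alpha(0)/(1+t)$; with that choice your argument goes through verbatim and in fact yields the stronger rate $O(t^{-1/2})$, of which the stated $O(t^{-(1/2-\gamma)})$ is a weakening, so nothing is lost. Second, with $\alpha(s)\sim s^{-(1-2\gamma)}$ the cumulative bound of Lemma~\ref{lem:conserror} would read $O(\tau^{1/2+\gamma})$ rather than $O(\sqrt{\tau})$, so your parenthetical consistency claim is slightly off, but this does not affect the validity of the corollary's proof.
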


Let $\bar{\nu}_t$ be the distribution associated with $\bar{X}(t)$

We are now ready to use the arguments in~\cite{dalalyan2019user} regarding perturbed Langevin methods. In particular, we can apply Proposition 2 to state that,
\begin{lemma}
\[
\begin{array}{l}
W_2(\bar{\nu}_{t+1},\pi) \le \rho_{t+1}W_2(\bar{\nu}_t,\pi)+1.65 L (\alpha_{t+1}^3 d)^{1/2}\\ \qquad +\alpha(t+1)\sqrt{d} L \sum\limits_{i=1}^m \mathbb E\|\bar{X}(t)-Z_{(i)}(t+1)\|
\end{array}
\]
where $\rho_{t+1} = \max(1-\mu \alpha(t+1),L\alpha(t+1)-1)$.
\end{lemma}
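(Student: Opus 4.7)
My proof plan for this lemma is to identify the update for the averaged iterate $\bar X$ as a perturbed unadjusted Langevin recursion in the sense of \cite{dalalyan2019user}, and then directly invoke their Proposition~2.

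First, I would rewrite \eqref{eq:avgupdate} in the canonical ``ULA with inaccurate gradient'' form. Setting $\widetilde U(x):=U(x)/m$, equation~\eqref{eq:avgupdate} becomes
\[
\bar X(t+1) = \bar X(t) - \alpha(t+1)\,\nabla\widetilde U(\bar X(t)) - \alpha(t+1)\,\zeta(t+1) + \bar B(t+1),
\]
where the gradient inaccuracy is
\[
\zeta(t+1) := \frac{1}{m}\sum_{i=1}^m\bigl(\nabla U_i(Z_{(i)}(t+1))-\nabla U_i(\bar X(t))\bigr).
\]
The leading part $\bar X(t)-\alpha(t+1)\nabla\widetilde U(\bar X(t))+\bar B(t+1)$ is exactly the ULA iterate for the target distribution $\pi$, started from the law of $\bar X(t)$, and the extra term $-\alpha(t+1)\zeta(t+1)$ is a controllable perturbation.

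Second, I would verify the hypotheses of Proposition~2 of \cite{dalalyan2019user}: strong convexity and Lipschitz continuity of $\nabla\widetilde U$ are inherited from Assumption~\ref{as:potentialas} (with constants $\mu/m$ and $L/m$, which can be absorbed into constants by rescaling the step size), and the step-size regime $\alpha(t+1)\le 2/(\mu+L)$ can be enforced by assumption on the schedule. Their proposition then yields the one-step recursion
\[
W_2(\bar\nu_{t+1},\pi) \le \rho_{t+1}\,W_2(\bar\nu_t,\pi) + 1.65\,L\,(\alpha_{t+1}^3 d)^{1/2} + \alpha(t+1)\sqrt{d}\,\mathbb E\|\zeta(t+1)\|,
\]
with $\rho_{t+1}=\max(1-\mu\alpha(t+1),L\alpha(t+1)-1)$: the first two terms capture the usual ULA contraction and discretization bias, and the third captures the inaccuracy contribution.

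Third, I would invoke $L$-Lipschitz continuity of each $\nabla U_i$ to control $\mathbb E\|\zeta(t+1)\|$ via the triangle inequality,
\[
\mathbb E\|\zeta(t+1)\| \le \frac{1}{m}\sum_{i=1}^m L\,\mathbb E\|\bar X(t)-Z_{(i)}(t+1)\|,
\]
from which the stated bound follows after absorbing the $1/m$ into the constants of the statement.

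The main obstacle I anticipate is matching the precise form of Dalalyan--Karagulyan's Proposition~2: in particular, justifying the $\sqrt d$ factor in the third term, which in their proof arises from coupling the exact and perturbed chains through a common Brownian increment and then passing from a second-moment bound to a first-moment bound. One must also confirm that $\bar B(t+1)$, which combines the isotropic Gaussian $R_{(i)}(t+1)$ with the zero-mean bounded-variance stochastic gradient noise $\xi_{(i)}(t+1)$, can play the role of their Gaussian driving noise; here the step-size cap $\sqrt{\alpha(t+1)}\le 1/\sigma^2$ from Section~\ref{s:alg} is what makes the non-Gaussian portion of $\bar B$ absorbable into the Gaussian term. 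Beyond this bookkeeping, the rest is a direct application of the cited proposition together with the Lipschitz bound above.
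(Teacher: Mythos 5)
Your proposal follows essentially the same route as the paper, which offers no argument beyond invoking Proposition~2 of \cite{dalalyan2019user} applied to the averaged recursion~\eqref{eq:avgupdate}; your identification of the perturbation term $\zeta(t+1)$ and its control via $L$-Lipschitz continuity of the $\nabla U_i$ is exactly the intended reduction, and your attention to the $1/m$ factors and to whether $\bar B(t+1)$ qualifies as the Gaussian driving noise supplies detail the paper omits.
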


Now recall
\begin{lemma}\cite[Lemma 2.4]{polyak1987introduction}\label{lem:polyak}
Let $u_k\ge 0$ and,
\[
u_{k+1}\le \left(1-\frac{c}{k}\right) u_k+\frac{d}{k^{p+1}}
\]
with $d>0$, $p>0$ and $c>0$ and $c>p$. Then,
\[
u_k\le d(c-p)^{-1}k^{-p}+o(k^{-p})
\]
\end{lemma}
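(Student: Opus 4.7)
The plan is to reduce the claim to a statement about a rescaled sequence. Introducing $v_k = k^p u_k$, the hypothesis becomes
\[
v_{k+1} \le (1+1/k)^p (1 - c/k)\, v_k + (1+1/k)^p \frac{d}{k}.
\]
Using the Taylor expansion $(1+1/k)^p = 1 + p/k + O(1/k^2)$ (valid uniformly in $k$ for fixed $p>0$) and multiplying out, the coefficient of $v_k$ becomes $1 - (c-p)/k + O(1/k^2)$, while the inhomogeneous term is $d/k + O(1/k^2)$. Note that the hypothesis $c>p$ makes $(c-p)/k$ positive. The conclusion $u_k \le d(c-p)^{-1} k^{-p} + o(k^{-p})$ is exactly the statement $\limsup_k v_k \le d/(c-p)$, so it suffices to prove this upper limit.

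Next I would shift the sequence by its predicted limit, setting $M_k = v_k - d/(c-p)$. The key algebraic identity is that $(1-(c-p)/k)\cdot d/(c-p) + d/k = d/(c-p)$, so substituting $v_k = M_k + d/(c-p)$ into the expanded recursion produces
\[
M_{k+1} \le \Bigl(1 - \frac{c-p}{k} + O\!\bigl(1/k^2\bigr)\Bigr) M_k + O\!\bigl(1/k^2\bigr),
\]
where the $O(1/k^2)$ coefficient on $M_k$ absorbs the constant factor $d/(c-p)$. The target reduces to showing $\limsup_k M_k \le 0$.

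To close the argument I would invoke a standard recursive-inequality lemma (of Chung–Robbins–Siegmund type): if $M_{k+1}^{+} \le (1-a_k) M_k^{+} + b_k$ with $a_k \ge 0$, $\sum_k a_k = \infty$, and $b_k/a_k \to 0$, then $M_k^{+} \to 0$. Here $a_k \sim (c-p)/k$ gives $\sum a_k = \infty$, and $b_k = O(1/k^2)$ gives $b_k/a_k = O(1/k) \to 0$, so the hypotheses are met and we conclude $\limsup_k v_k \le d/(c-p)$, which is exactly the lemma.

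The main obstacle is not any single step but the bookkeeping around the $O(1/k^2)$ error terms: one must verify that the small perturbation of the leading recursion $v_{k+1} \approx (1-(c-p)/k) v_k + d/k$ does not spoil the limit, and that $v_k$ (equivalently $M_k^{+}$) is bounded over the initial transient so that the asymptotic regime can be entered at all. An alternative proof route, which sidesteps appealing to an auxiliary lemma, is a direct induction with the ansatz $u_k \le A/k^p$ for any $A > d/(c-p)$: the Taylor expansion $(k+1)^{-p} = k^{-p}(1 - p/k + O(1/k^2))$ makes the inductive step reduce to $A(c-p) \ge d + O(1/k)$, which holds for all sufficiently large $k$; the base case is then handled by enlarging $A$ to cover the initial value at the threshold index, noting that the admissible threshold only decreases as $A$ increases. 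Letting $A \downarrow d/(c-p)$ then yields the claim.
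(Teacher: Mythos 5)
The paper does not actually prove this statement: it is quoted verbatim from Polyak's book (Lemma 2.4 of \cite{polyak1987introduction}) and used as a black box, so the only thing to compare your argument against is the classical one. Your main proof is correct and is essentially that classical argument (a deterministic Chung/Robbins--Siegmund-type comparison): the rescaling $v_k=k^pu_k$, the expansion of $(1+1/k)^p$, the cancellation $(1-(c-p)/k)\tfrac{d}{c-p}+\tfrac{d}{k}=\tfrac{d}{c-p}$, and the passage to positive parts $M_k^+$ are all sound, and the reduction of the stated conclusion to $\limsup_k v_k\le d/(c-p)$ is exact. Two small points worth writing out if you formalize it: (i) after centering, the inhomogeneous term $b_k=O(1/k^2)$ need not be nonnegative, so replace it by $|b_k|$ before invoking the comparison lemma; (ii) the coefficient $1-c/k$ being negative for small $k$ is harmless because you only use the recursion from a large enough index onward, each $v_k$ being a finite number.

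The ``alternative route'' you sketch at the end, however, does not deliver the stated bound as written. If the base case at the threshold index fails and you enlarge $A$ to some $A'>A$ to cover the initial value there, the induction then yields $u_k\le A'/k^p$, not $u_k\le A/k^p$; you cannot subsequently ``let $A\downarrow d(c-p)^{-1}$'' because every $A$ near $d(c-p)^{-1}$ has its base case destroyed by the same initial transient. This naive induction only gives $u_k=O(k^{-p})$ with an uncontrolled constant. Recovering the sharp constant requires showing that the influence of the initial condition decays like $\prod_j(1-c/j)\asymp k^{-c}=o(k^{-p})$ (this is precisely where $c>p$ enters), or equivalently running the limsup argument of your main proof. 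Keep the first argument as the proof and drop, or repair, the alternative.
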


Apply the previous two Lemmas together with Corollary~\ref{cor:cons} to conclude,
\begin{theorem}\label{th:conv}
Let $\alpha(0)\le \min\left\{\frac{1}{2L},\frac{\mu}{4L^2}\right\}$ and
$\alpha(t)=\alpha(0)/(1+t)$. We have that,
\[
W_2(\bar{\nu}_{k+1},\pi) \le \frac{(C_\gamma m+1.65) Ld^{1/2}}{(\mu\alpha(0)-1/2+\gamma) (1+t)^{(1/2-\gamma)}}+\beta_k
\]
where $\beta_k=o\left((1+t)^{-(1/2-\gamma)}\right)$
\end{theorem}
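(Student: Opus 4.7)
The plan is to reduce the theorem to a scalar recursion of the form $u_{t+1}\le (1-c/t)u_t+d/t^{p+1}$ for $u_t:=W_2(\bar\nu_t,\pi)$ and then close the argument with Lemma~\ref{lem:polyak}. The starting point is the one-step bound from the preceding lemma, and the first task is to verify that the contraction factor collapses to the usual form under the hypothesis $\alpha(0)\le\min\{1/(2L),\mu/(4L^2)\}$: since $\alpha(t+1)\le 1/(2L)$ one has $L\alpha(t+1)-1\le -1/2$, whereas $\mu\alpha(t+1)\le \mu^2/(4L^2)\le 1/4$ (using $\mu\le L$) gives $1-\mu\alpha(t+1)\ge 3/4$, so the maximum defining $\rho_{t+1}$ is attained by the positive branch and $\rho_{t+1}=1-\mu\alpha(t+1)$ throughout the run.

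Next I insert Corollary~\ref{cor:cons} into the sum of consensus errors and the schedule $\alpha(t+1)=\alpha(0)/(2+t)$ into the remaining factors, producing, for any $\gamma\in(0,1/2)$,
\[
u_{t+1}\le\left(1-\frac{\mu\alpha(0)}{2+t}\right)u_t+\frac{1.65\,L\,d^{1/2}\alpha(0)^{3/2}}{(2+t)^{3/2}}+\frac{\alpha(0)\,L\,m\,C_\gamma\,d^{1/2}}{(2+t)\,t^{1/2-\gamma}}.
\]
The first inhomogeneous term decays as $t^{-3/2}$ and the second as $t^{-(3/2-\gamma)}$; majorizing both by a common upper bound of order $t^{-(3/2-\gamma)}$ with numerator proportional to $(1.65+mC_\gamma)Ld^{1/2}$ (absorbing powers of $\alpha(0)\le 1$ that follow from the step-size assumption), the recursion fits the hypothesis of Lemma~\ref{lem:polyak} with $c=\mu\alpha(0)$ and $p=1/2-\gamma$.

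I then apply Lemma~\ref{lem:polyak}. Its admissibility hypothesis $c>p$ becomes $\mu\alpha(0)>1/2-\gamma$; since $\gamma$ is a free parameter I restrict attention to $\gamma\in(\max\{0,1/2-\mu\alpha(0)\},1/2)$, which is exactly the range on which the denominator $\mu\alpha(0)-1/2+\gamma$ in the theorem is positive. Polyak then yields $u_t\le d(c-p)^{-1}t^{-(1/2-\gamma)}+o(t^{-(1/2-\gamma)})$, and rearranging constants matches the stated bound, with the faster-decaying $O(t^{-1/2})$ piece coming from the pure Langevin noise term and all remainders from the index shift $(2+t)\leftrightarrow(1+t)$ swept into $\beta_k$.

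The main obstacle is not any individual inequality but the bookkeeping around compatibility: picking $\gamma$ in the admissible sub-interval, controlling the differences between $(2+t)^a$, $(1+t)^a$ and $t^a$ at finite $t$, and showing that they contribute only lower-order terms that fit inside $\beta_k=o\bigl((1+t)^{-(1/2-\gamma)}\bigr)$. These verifications are routine but delicate, and a careless treatment risks producing an extraneous $\alpha(0)$ factor or the wrong numerical constant in the leading coefficient.
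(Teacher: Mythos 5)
Your proposal is correct and follows exactly the route the paper intends: the paper itself gives no written proof beyond the instruction to combine the perturbed-Langevin one-step bound, Corollary~\ref{cor:cons}, and Lemma~\ref{lem:polyak}, which is precisely the recursion-plus-Polyak argument you carry out (including the correct identification $\rho_{t+1}=1-\mu\alpha(t+1)$ under the step-size hypothesis and the admissibility condition $\mu\alpha(0)>1/2-\gamma$ matching the positivity of the denominator). Your closing remark about a possible stray $\alpha(0)$ factor in the leading constant is a fair observation about the paper's own bookkeeping, not a gap in your argument.
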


\section {Numerical Experiments}

\subsection{Notes on Parameter Tuning}
There is a classical practical question in regards to implementing SGLD algorithms for sampling--how do we tune the step-size, considering its dual influence on the convergence as well as the variance of Brownian term? There are some papers, whose main goals are to study the question. The first mentioned related work ~\cite{Neklyudov2018KLdivergence} solves the problem via finding the choice that minimizes the asymptotic KL-divergence. However, it applies only in the case of independent proposal distributions, when new samples are generated regardless of the history. Thus, applying such approach to classic proposals as Random Walk Metropolis (RWM) or Metropolis-Hastings adjusted Langevin algorithm (MALA), it leads to a collapsed delta function solution. Recently, another article ~\cite{Titsias} was published extending~\cite{Neklyudov2018KLdivergence}. Optimizing a special speed measure ((2) in ~\cite{Titsias}), one can obtain the algorithm which can tune the optimal step-size as well as covariance matrix for the best convergence. Also, it is worth noticing, that entropy is at the heart of Titsias's approach. However, entropy is not the only way to solve such problems. For instance, ~\cite{NICE-MC} and ~\cite{L2HMC} use other methods to get optimal parameters, although the second paper applies their method for HMC. However, we prefer to pick out the Titsias's approach for obtaining optimal step-sizes in our algorithm.

\subsection{Bayesian Linear Regression}
In this subsection, we study the algorithm's ability to converge to a desired distribution arising from Bayesian Linear Regression. We consider a multi-agent network  composed of 4 nodes. Classical linear regression is described by the following expression:
\[ y = Xw + \delta \]
where $y$ constitutes a target variable and the noise is denoted by $\delta$, whereas $X$ and $w$ are a set of features and weights of the linear model, respectively. We can generate data as follows:
\[ \delta \sim \mathcal{N}(0,\sigma^{2}) , \quad X_{j} \space \sim \mathcal{N}(0,I_{1}) \quad \forall j  \]
We generate 800 samples and separate 200 samples across each of four machines. Thus, each agent is going to process its 200 samples, not having access to samples of the other agents.
Our main goal is to get the posterior distribution of weights. However, to facilitate the problem we are going to use "poor" Bayesian inference. In other words, one would like to find out the mode of the posterior distribution. Since we use Bayesian linear regression, we should introduce a prior distribution for weights. Implying features' equal a priori significance, we choose a zero-mean standard normal distribution with  an identity 2 dimensional covariance matrix. Recall Bayes's theorem as follows:
\[ p(w| X,y) = \frac{p(y|X,w)p(w)}{\int_{w}p(y|X,w)p(w)dw}    \]
where $p(w)$ constitutes the prior distribution for weights, while $p(y|X,w)$ is a likelihood of data.  In the experiment, each node will take its own mini-batch, whose size is equal to 1, will calculate consensus step and point where we take the gradient. Having done these computations, each node samples a new object from the posterior distribution of weights. Since our posterior distribution is a multivariate normal distribution and a distribution that makes a current sample on a node is the same, one can calculate the second Wasserstein distance between this distributions analytically as follows:
\[ \mathcal{W}_{2}(\mathcal{N}(m_{1},\Sigma_{1}),\mathcal{N}(m_{2},\Sigma_{2})) = \| m_{1} - m_{2}\| + \|\Sigma_{1}^{\frac{1}{2}} - \Sigma_{2}^{\frac{1}{2}}\|\]
Thus, we repeat this process for 200 iterations (1 batch per one iteration, i.e., one epoch), sometimes changing the set of edges. In figure ~\ref{Bayesian Linear regresssion plots}(a), one can look at the final plot that deals with the convergence between distributions in terms of the second Wasserstein distance. Meanwhile, one can see the consensus error 
for each one from 4 agents in figure ~\ref{Bayesian Linear regresssion plots}(b).

\begin{figure}
\centering
\subfloat[The 2nd Wasserstein distance]{\includegraphics[width = 1.6in]{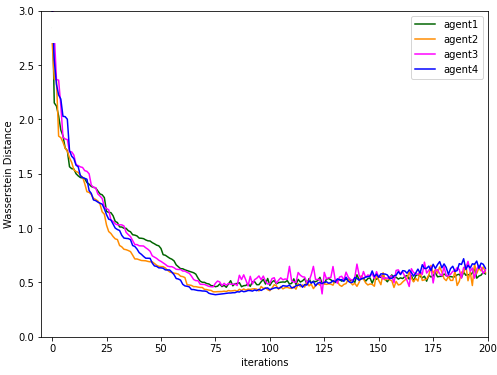}} 
\subfloat[The consensus error]{\includegraphics[width = 1.68in]{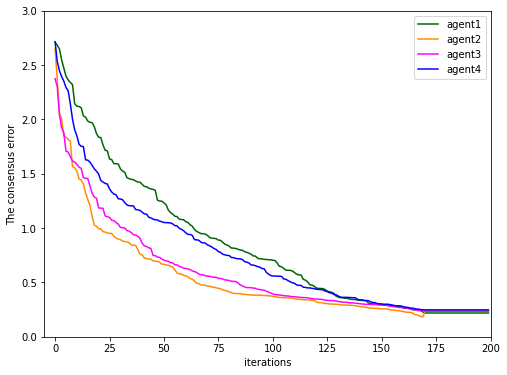}}
\caption{The Second  Wasserstein Distance and the consensus error in case of Bayesian linear regression}
\label{Bayesian Linear regresssion plots}
\end{figure}

\subsection{Sampling from Gaussian mixture}
In this subsection, we illustrate an experiment that deals with  sampling from a  multi-modal probability distribution. Recall the definition of a Gaussian mixture:
\[\sum_{k = 1}^{n} \phi_{k}\mathcal{N}(m_{k},\Sigma_{k})\]
where a the mixture constants $\phi_{k}$ satisfy $\sum_{k=1}^{n}\phi_{k}$ = 1.
Let us consider a mixture of two one-dimensional Gaussian distributions under the assumption that their standard deviations are known, but not their means. However, there are some prior distributions for the means of both Gaussians. Then, we get for each sample $x_{i}$, that:
\begin{equation}
\label{eq: mix_norm}
x_{i} \sim \frac{1}{2}\mathcal{N}(\theta_{1},\sigma_{x}^{2}) +  \frac{1}{2}\mathcal{N}(\theta_{1} + \theta_{2} ,\sigma_{x}^{2})
\end{equation}
where $\sigma_{x}^{2}$ equals to 2. As for prior distributions, they will be as below:
\[\theta_{1} \sim \mathcal{N}(0,\sigma_{1}^{2}),\theta_{2} \sim \mathcal{N}(0,\sigma_{2}^{2}) \]
where $\sigma_{1}^{2}$ and $\sigma_{2}^{2}$
equal to 10 and 1 respectively.
Now, one can fix $\theta_{1}$ = 0 and $\theta_{2}$ = 1. Thus, having fixed values of means, we get the certain mixture of two Gaussian distributions. In this experiment, we consider a decentralized system that is composed of 4 nodes. Then, one can generate 800 samples from the distribution of $x_{i}$ defined above (\ref{eq: mix_norm}) with these given $\theta_1$ and $\theta_2$ and randomly and separate the samples evenly across the network.

Then, the main goal of the experiment is to propose a posterior distribution for $\theta_{1}$ and $\theta_{2}$ from the set of samples, specifically,:
\[  p(\vec{\theta}|X,\sigma_{x},\sigma_{1},\sigma_{2}) \sim [\prod_{i =1}^{n} p(X_{i}|\vec{\theta},\sigma_{x})] p(\vec{\theta}|\sigma_{1},\sigma_{2}) \]
Using MCMC methods and taking the gradients of the joint log-likelihood, one can obtain samples from a desired distribution. We apply the Algorithm given in~\eqref{eq:langdynamicsout}, where we take the gradient of a joint log-likelihood, for each node, for its set of samples, along with the consensus step at each iteration..
The step-size is diminishing and is taken as $\alpha(t) = \frac{a}{(b + t)^{\gamma}}$, where $\gamma$ equals to 0.65, when a and b are set such that $\alpha(t)$ is changed from 0.01 to 0.0001.
In Figure~\ref{Gaussian-mixture-plots}: (a) - (d), one can see how each agent is able to sample from Gaussian mixture posterior distribution above.\\[0.3 cm]
Moreover, one would like to look at an error between each node and average over all nodes on each iteration. Such difference between "average" node and an agent of decentralized network one can call as "the consensus error". In other words, "Tte consensus error" shows up at all the following expression on each iteration for each node in the multi-agent network:
\[ \|X_{(i)}(t) - \bar{X}(t)\|^{2}\]
Thus, one should take values, which are generated by the algorithm from $i$'th node, and calculate the squared difference on each $t$ iteration. One can look at the consensus error for 4 agents in case of sampling from Gaussian mixture in Figure~\ref{Gaussian-mixture-plots}(e). 

\begin{figure}
\centering
\subfloat[agent 1]{\includegraphics[width = 1.68in]{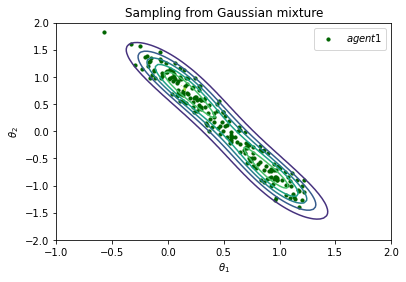}} 
\subfloat[agent 2]{\includegraphics[width = 1.68in]{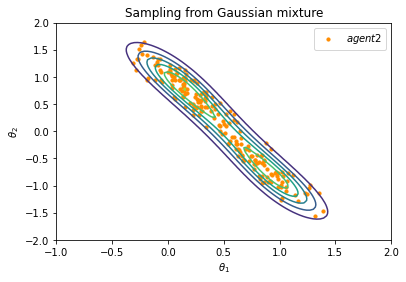}}\\
\subfloat[agent 3]{\includegraphics[width = 1.68in]{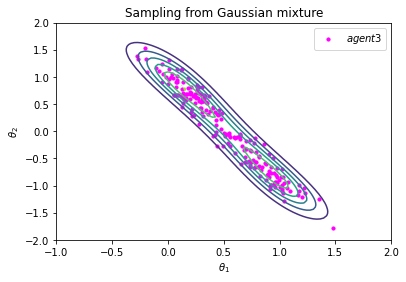}}
\subfloat[agent 4]{\includegraphics[width = 1.68in]{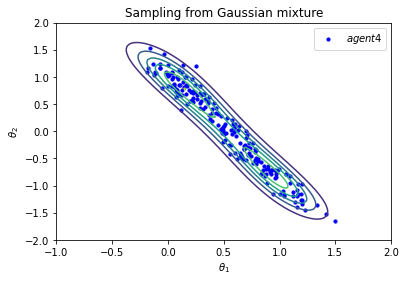}}\\
\subfloat[The consensus error]{\includegraphics[width = 1.68in]{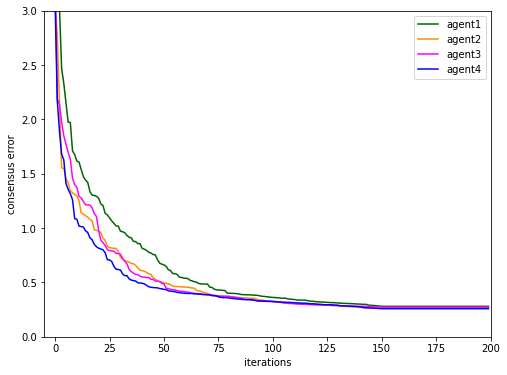}} 
\caption{(a) -(d) :The sampling from Gaussian mixture for 4 agents,
(e) - The consensus error for 4 agents}
\label{Gaussian-mixture-plots}
\end{figure}

We have included results for sampling from a Gaussian mixture in decentralized settings with 8 nodes. Figure~\ref{Bayesian-Logistic-regression-plots} displays the results. One can observe the following expected fact. The more agents takes part in Bayesian logistic regression with decentralized settings, the less variance of accuracy and the faster convergence to the consensus accuracy. The same results are in the experiment for sampling from Gaussian mixture. The more agents are in decentralized settings, the more clearly are seen the two modes of this distribution.

\begin{figure}
\centering
\subfloat[agent 1]{\includegraphics[width = 1.6in]{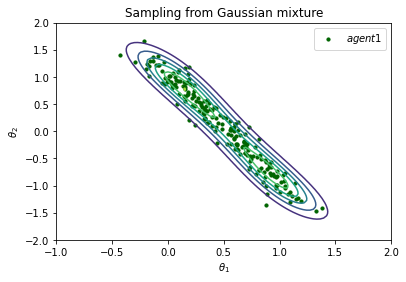}} 
\subfloat[agent 2]{\includegraphics[width = 1.6in]{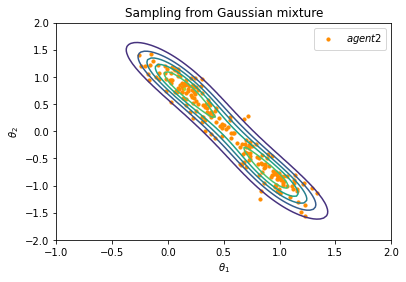}}\\
\subfloat[agent 3]{\includegraphics[width = 1.6in]{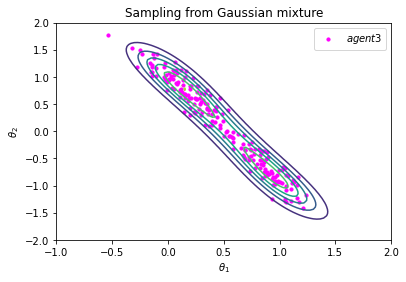}}
\subfloat[agent 4]{\includegraphics[width = 1.6in]{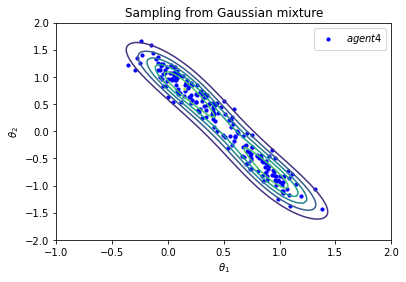}}\\
\subfloat[agent 5]{\includegraphics[width = 1.6in]{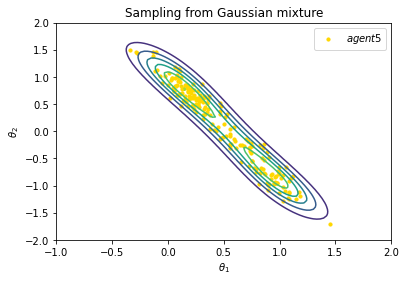}} 
\subfloat[agent 6]{\includegraphics[width = 1.6in]{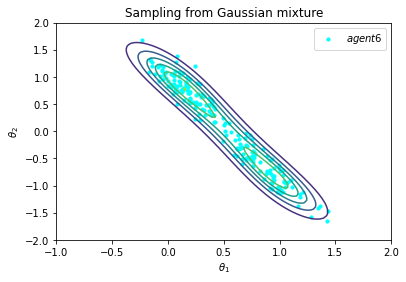}}\\
\caption{Samples from SGLD in case of sampling from Gaussian mixture by 6 agents}
\label{Bayesian SGLD by 6 agents}
\end{figure}

\begin{figure}
\centering
\subfloat[agent 1]{\includegraphics[width = 1.6in]{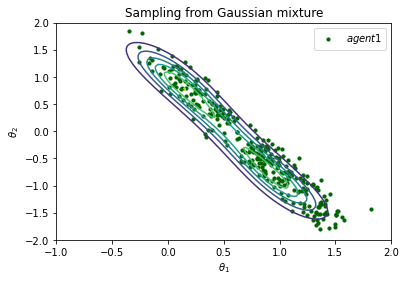}} 
\subfloat[agent 2]{\includegraphics[width = 1.6in]{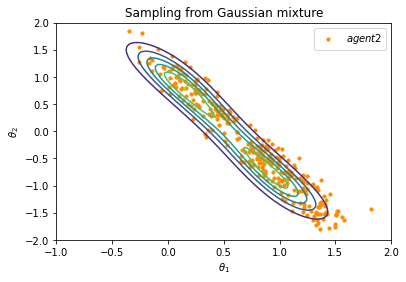}}\\
\subfloat[agent 3]{\includegraphics[width = 1.6in]{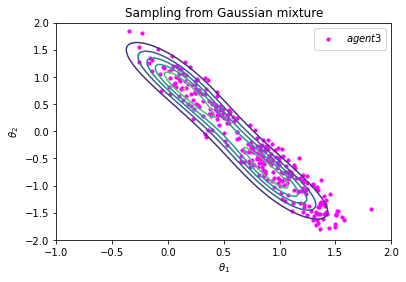}}
\subfloat[agent 4]{\includegraphics[width = 1.6in]{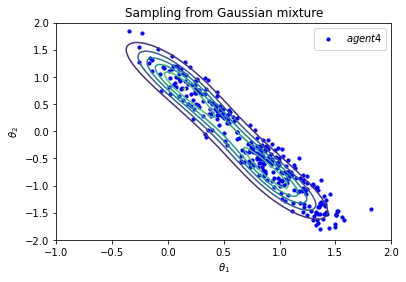}}\\
\subfloat[agent 5]{\includegraphics[width = 1.6in]{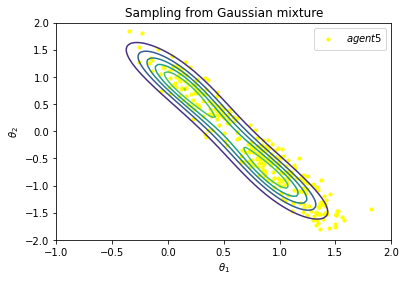}} 
\subfloat[agent 6]{\includegraphics[width = 1.6in]{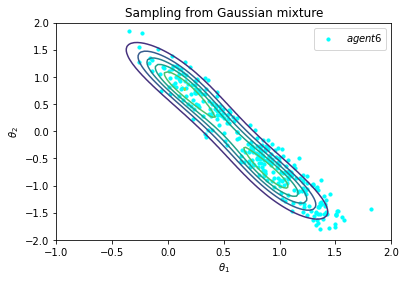}}\\
\subfloat[agent 7]{\includegraphics[width = 1.6in]{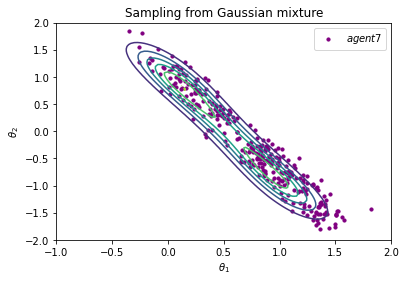}} 
\subfloat[agent 8]{\includegraphics[width = 1.6in]{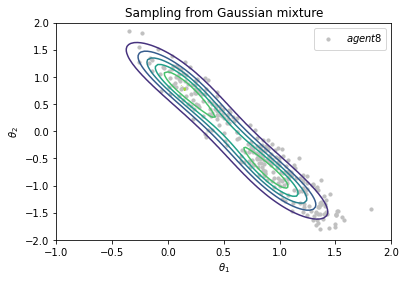}}\\
\subfloat[consensus]{\includegraphics[width = 1.68in]{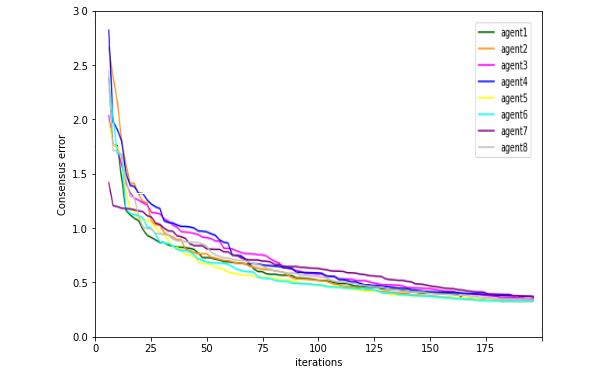}} 
\caption{Samples from SGLD and the Consensus error in case of sampling from Gaussian mixture by 8 agents}
\label{Bayesian-Logistic-regression-plots}
\end{figure}

\begin{figure}
\centering
\subfloat[agent 1]{\includegraphics[width = 1.6in]{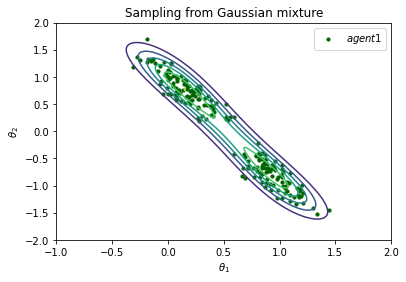}} 
\subfloat[agent 2]{\includegraphics[width = 1.6in]{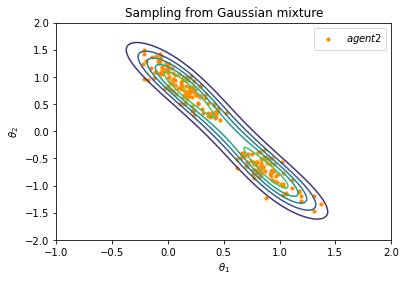}}\\
\subfloat[agent 3]{\includegraphics[width = 1.6in]{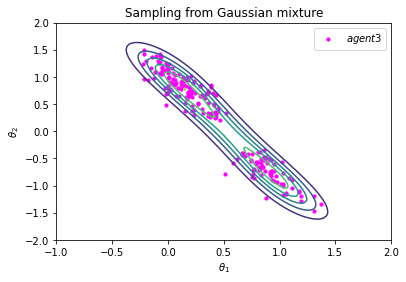}}
\subfloat[agent 4]{\includegraphics[width = 1.6in]{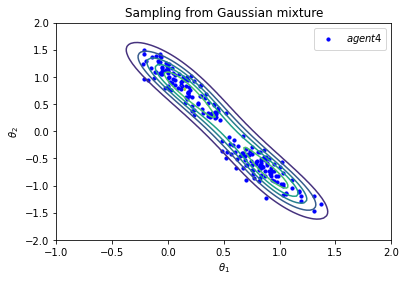}}\\
\subfloat[agent 5]{\includegraphics[width = 1.6in]{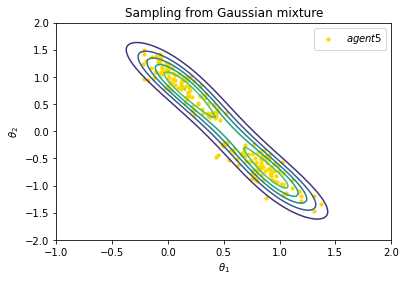}} 
\subfloat[agent 6]{\includegraphics[width = 1.6in]{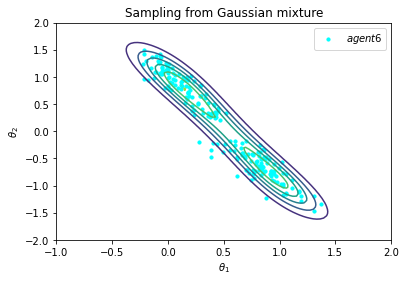}}\\
\subfloat[agent 7]{\includegraphics[width = 1.6in]{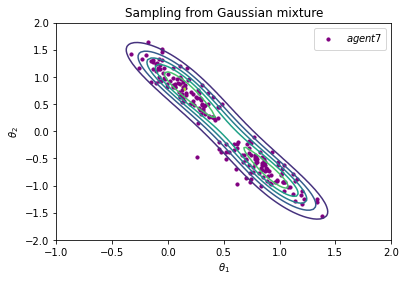}}
\subfloat[agent 8]{\includegraphics[width = 1.6in]{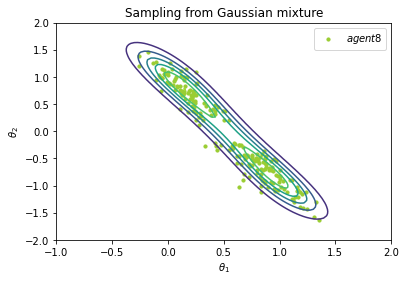}}\\
\subfloat[agent 9]{\includegraphics[width = 1.6in]{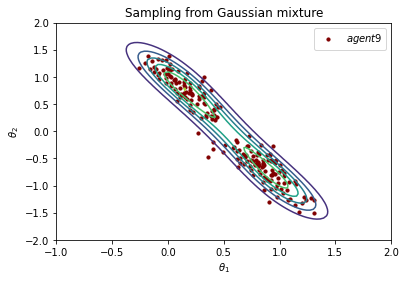}}
\subfloat[agent 10]{\includegraphics[width = 1.6in]{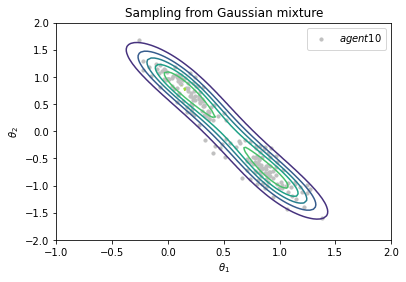}}\\
\caption{Samples from SGLD in case of sampling from Gaussian mixture by 10 agents}
\label{Bayesian SLGD 10 samples}
\end{figure}

\subsection{Bayesian Logistic Regression}
In the final subsection, we apply our algorithm, which is based on stochastic gradient Langevin algorithm, to a case of Bayesian logistic regression for real data. Specifically we consider the problem of binary classification, with a target variable y $\in \{-1, 1\}$. 
The a9a data-test is a huge-scale and classic data-set for binary classification. This data-set is available at the UCI machine learning repository. It is composed of 32561 samples and 123 features. Each feature has 2 unique values 1 or 0, whereas the target variable has values 1 and -1. In this experiment our decentralized setting is composed of 4 agents. Then, we take 80\% of data as train data-set and separate between nodes equally and  randomly. 

We have the following expression for the posterior for $N$ observations:
\[ \pi(w) = p(w|x,y) = \frac{1}{C} \prod_{i = 1}^{N} p(y_{i}|x_{i},w) p(w)\]
where C is a normalizing constant. Then we take~\eqref{eq:langdynamicsout} with $\nabla U_i(Z)=\nabla\left( \prod_{i \in S_i} -\log p(y_{i}|x_{i},Z) p(Z)\right)$ where $S_i\subseteq [N]$ the subset of data given to agent $i$. One can write the expression for the gradient of potential function as follows:
\[  \nabla U(z) = \sum_{i = 1}^{m} \frac{y_{i}x_{i}}{1 + e^{-y_{i}z^{T}x_{i}}} - sign(z) \]
Having computed a consensus solution and weights, where we take the gradient of the potential function for each node, one can recalculate a new value for $w$ and compute the ROC-AUC score on test data-set, whose the size is equal to 20\% of the whole data-set, for each node. The step-size constitutes diminishing step-size and is equal to 
$ \frac{\alpha(0)}{(\gamma + t)^{\phi}}$, where $\alpha(0)$ = 0.008, $\gamma$ = 12, $\phi$ = .45.
In Figure~\ref{Bayesian-Logistic-regression-plotsb}, we plot the ROC-AUC curve for each node in this decentralized setting. We  see the convergence during one epoch (one iteration through the data). The average of nodes reaches  an ROC-AUC score of 84.36\% in 1000 iterations. 

Finally, to see how the accuracy scales with the number of agents, Figure~\ref{Bayesian-Logistic-regression-plots-2} plots the accuracy for each agent for $m\in \{2,6,8,10,14\}$. It can be observe that the cooperative sampling leads to greater stability in the overall performance, and a faster settling of the accuracy curve.

\begin{figure}
\centering
\subfloat[agent 1]{\includegraphics[width = 1.68in]{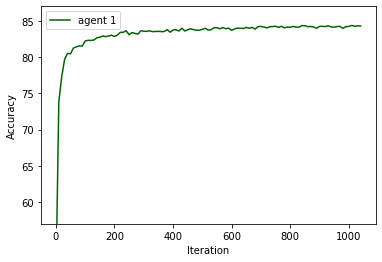}} 
\subfloat[agent 2]{\includegraphics[width = 1.68in]{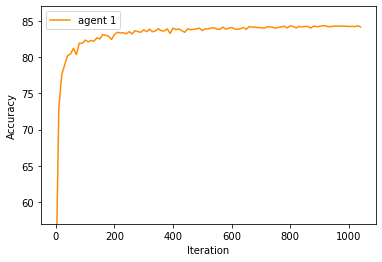}}\\
\subfloat[agent 3]{\includegraphics[width = 1.68in]{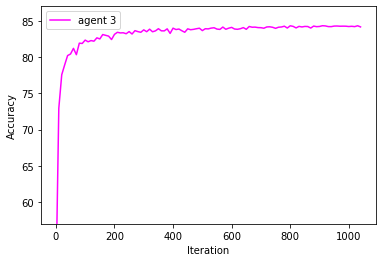}}
\subfloat[agent 4]{\includegraphics[width = 1.68in]{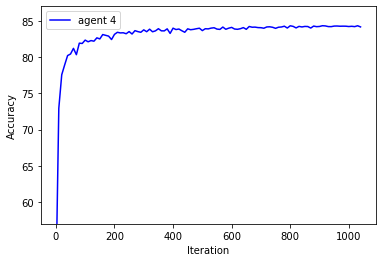}} 
\caption{ROC-AUC curve in Bayesian logistic regression}
\label{Bayesian-Logistic-regression-plotsb}
\end{figure}

\begin{figure}
\centering
\subfloat[2 agents]{\includegraphics[width = 1.6in]{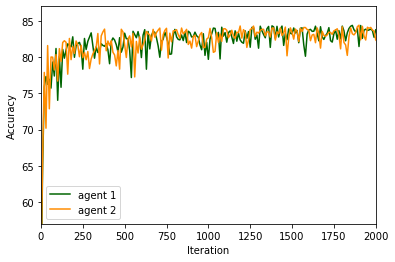}}
\subfloat[6 agents]{\includegraphics[width = 1.6in]{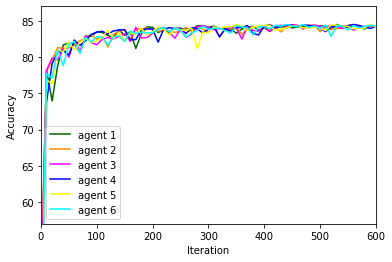}}\\
\subfloat[8 agents]{\includegraphics[width = 1.6in]{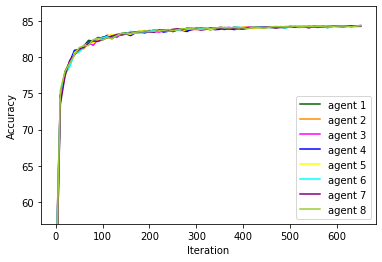}}
\subfloat[10 agents]{\includegraphics[width = 1.6in]{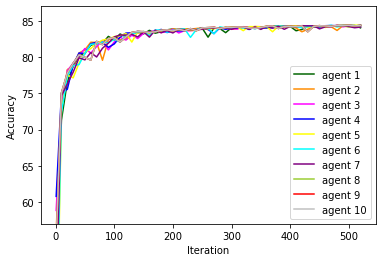}}\\
\subfloat[14 agents]{\includegraphics[width = 1.6in]{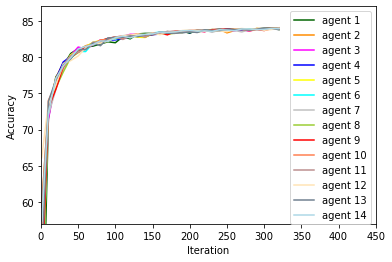}}
\caption{Bayesian Logistic Regression for many agents}
\label{Bayesian-Logistic-regression-plots-2}
\end{figure}

\section{Conclusion}

In this paper we considered the problem of Langevin dynamics for sampling from a distribution with a potential function consisting of data distributed across agents. We consider a decentralized framework wherein the agents communicate by the structure of a directed graph. We proved asymptotic consensus as well as convergence in Wasserstein distance to the stationary distribution of the procedure, and demonstrated the efficacy of the procedure on standard test problems in sampling. 

Intended future work can be studying the properties of the method as a means of finding globally optimal points for nonconvex noisy optimization problems, as well as quantization, study of large scale speedup, and other aspects of sampling in a federated setting.
\section*{Acknowledgements}
The second author would like to acknowledge support from the OP VVV project
CZ.02.1.01/0.0/0.0/16\_019/0000765 Research Center for Informatics

\bibliographystyle{plain}
\bibliography{refs}

\section{Appendix: Proofs of Theoretical Results}

\begin{lemma}\label{lem:helptoboundb}
    It holds that there exists some compact set $\mathcal{W}$ such that for all $i$
    \[
    \mathbb{E}\left\|\frac{X_{(i)}(t+1)}{Y_{(i)}(t+1)}\right\|\le \left\{\begin{array}{lr} \|Z_{(i)}(t+1)\| &
     Z_{(i)}(t+1)\notin\mathcal{W} \\
    R & \text{otherwise}\end{array}\right.
    \]
    with $R$ depending on $\mathcal{W}$ and problem constants.
\end{lemma}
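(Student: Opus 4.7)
The first move is to rewrite the quantity of interest by substituting the algorithmic update. From~\eqref{eq:langdynamicsout} we have $X_{(i)}(t+1) = W_{(i)}(t+1) - \alpha(t+1)\nabla U_i(Z_{(i)}(t+1)) + \sqrt{2\alpha(t+1)}\,B_{(i)}(t+1)$, and since $Z_{(i)}(t+1) = W_{(i)}(t+1)/Y_{(i)}(t+1)$, dividing through by the (positive) scalar $Y_{(i)}(t+1)$ yields
\[
\frac{X_{(i)}(t+1)}{Y_{(i)}(t+1)} \;=\; Z_{(i)}(t+1) \;-\; \frac{\alpha(t+1)\,\nabla U_i(Z_{(i)}(t+1))}{Y_{(i)}(t+1)} \;+\; \frac{\sqrt{2\alpha(t+1)}\,B_{(i)}(t+1)}{Y_{(i)}(t+1)}.
\]
The job is to control the last two pieces uniformly in $t$, and then use strong convexity to either absorb them into $\|Z_{(i)}(t+1)\|$ when $Z_{(i)}(t+1)$ is far from the minimizer, or bound them by a constant when $Z_{(i)}(t+1)$ is near it.

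Next I would establish a deterministic lower bound on $Y_{(i)}(t+1)$. Since $Y(0)=\mathbf{1}$, $Y(t+1)=A(t)Y(t)$ and each $A(t)$ is column-stochastic, the spectral lower bound $\delta$ from Section~2.2 on entries of long products of $A(\cdot)$'s gives a uniform bound of the form $Y_{(i)}(t+1)\ge m\delta>0$, so $1/Y_{(i)}(t+1)$ is bounded. Combined with the step-size assumption $\sqrt{\alpha(t+1)}\le 1/\sigma^2$, this shows that $\mathbb{E}\|\sqrt{2\alpha(t+1)}\,B_{(i)}(t+1)/Y_{(i)}(t+1)\|$ is bounded by a constant depending only on $\delta$, $m$, and the problem parameters, since $B_{(i)}(t+1)=\sqrt{\alpha(t+1)}\xi_{(i)}(t+1)+R_{(i)}(t+1)$ has bounded first moment by hypothesis.

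I would then exploit strong convexity on the deterministic drift. Letting $Z_i^\star$ denote the minimizer of $U_i$, the standard expansion
\[
\left\|Z - \tfrac{\alpha}{Y}\nabla U_i(Z)\right\|^2 \le \|Z-Z_i^\star\|^2\left(1 - \tfrac{2\mu\alpha}{Y} + \tfrac{L^2\alpha^2}{Y^2}\right) + \text{lower-order terms involving }\|Z_i^\star\|,
\]
shows that, for $\alpha(t+1)$ small enough (guaranteed by the step-size assumption and the upper bound on $1/Y$), the drift map $Z\mapsto Z-\alpha\nabla U_i(Z)/Y$ is a strict contraction toward $Z_i^\star$. Choosing $\mathcal{W}$ to be a ball around $Z_i^\star$ whose radius is large enough that the residual noise bound from the previous paragraph is absorbed by the slack in the contraction, the case analysis concludes: outside $\mathcal{W}$ the contraction dominates and $\mathbb{E}\|X_{(i)}(t+1)/Y_{(i)}(t+1)\|\le \|Z_{(i)}(t+1)\|$; inside $\mathcal{W}$ the triangle inequality together with compactness gives $\mathbb{E}\|X_{(i)}(t+1)/Y_{(i)}(t+1)\|\le R$ for some $R$ depending on the radius of $\mathcal{W}$, $L$, $\mu$, $\delta$, and the noise moment.

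The main obstacle is calibrating $\mathcal{W}$ so that the same radius works for all $t$, because the contraction factor $1-2\mu\alpha(t+1)/Y_{(i)}(t+1)+L^2\alpha(t+1)^2/Y_{(i)}(t+1)^2$ depends on $t$, while the additive noise bound does not decay. Since $\alpha(t)$ is non-increasing and $Y_{(i)}(t)$ admits uniform upper and lower bounds by column stochasticity, one can fix $\mathcal{W}$ using the worst-case contraction (driven by $\alpha(0)$ and the lower bound $m\delta$), but one must be careful that this worst case still yields a genuine contraction rather than merely a non-expansion, which is precisely what the assumption $\alpha(0)\le\min\{1/(2L),\mu/(4L^2)\}$ used later in Theorem~\ref{th:conv} enforces.
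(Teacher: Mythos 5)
Your proposal follows essentially the same route as the paper's proof: the same decomposition $X_{(i)}(t+1)/Y_{(i)}(t+1)=Z_{(i)}(t+1)-\alpha(t+1)\nabla U_i(Z_{(i)}(t+1))/Y_{(i)}(t+1)+\sqrt{2\alpha(t+1)}\,B_{(i)}(t+1)/Y_{(i)}(t+1)$, the same uniform upper and lower bounds on $Y_{(i)}$ from column stochasticity, and the same use of strong convexity to obtain a contraction outside a compact set with the constant bound $R$ inside it by compactness. The only cosmetic difference is that you center the contraction at the minimizer and take $\mathcal{W}$ to be a ball, whereas the paper bounds $\nabla U_i(Z)^{T}Z$ at the origin and takes $\mathcal{W}$ to be a sublevel set of $U_i$; both are valid.
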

\begin{proof}
This is similar to the proof of Lemma 3 in ~\cite{nedic2016stochastic}.

Notice that,
\[
\begin{array}{l}
X_{(i)}(t+1)/Y_{(i)}(t+1) = Z_{(i)}(t+1)-\alpha(t+1) \nabla U_i(Z_{(i)}(t+1))/Y_{(i)}(t+1) \\ \qquad + 
\sqrt{\alpha(t+1)} B_{(i)}(t+1)/Y_{(i)}(t+1)
\end{array}
\]
Without loss of generality, let $X'_{(i)}(t+1) = X_{(i)}(t+1)/Y_{(i)}(t+1)$. We know that there exists $\delta\le Y_{(i)}(t+1)\le m $ for all $i$ and $t$ as shown in the beginning of the proof of Theorem 2 in~\cite{nedic2016stochastic}.

Indeed we have, for any $Z_{(i)}(t+1)$,
by strong convexity,
\[
\begin{array}{l}
\nabla U_i(Z_{(i)}(t+1))^T Z_{(i)}(t+1)
\ge U_i(Z_{(i)}(t+1))-U_i(0)\\ \qquad\qquad +\frac{\mu}{2} \|Z_{(i)}(t+1)\|^2
\end{array}
\]
Continuing as in the original,
\[
\begin{array}{l}
\|X_{(i)}'(t+1)\|^2 \le (1-\alpha(t+1) \mu / m) 
\|Z_{(i)}(t+1)\|^2 \\ \qquad-2\alpha(t+1) (U_i(Z_{(i)}(t+1))-U_i(0))/Y_{(i)}(t+1)^2 \\ \qquad  - 2\sqrt{\alpha(t+1)}
B_{(i)}(t+1)^T Z_{(i)}(t+1)/Y_{(i)}(t+1)^2
\\ \qquad + 2\alpha(t+1)^2\|\nabla U_i(Z_{(i)}(t+1))\|^2/Y_{(i)}(t+1)\\ \qquad +4\alpha(t+1) \|B_{(i)}(t+1)\|^2/Y_{(i)}(t+1)^2
\end{array}
\]
We have that,
\[
\|\nabla U_i(Z_{(i)}(t+1))\|^2 \le 
2L^2 \|Z_{(i)}(t+1)\|^2+2 \|\nabla U_i(0)\|^2
\]
and so,
\[
\begin{array}{l}
\|X_{(i)}'(t+1)\|^2 \\ \qquad \le \left(1-\alpha(t+1)\mu/m +2\alpha(t+1) L^2 /\delta^2\right) \|Z_{(i)}(t+1)\|^2 \\ \qquad \qquad -2\alpha(t+1) (U_i(Z_{(i)}(t+1))-U_i(0))/m \\ \qquad \qquad - 2\sqrt{\alpha(t+1)}
B_{(i)}(t+1)^T Z_{(i)}(t+1)/m
\\ \qquad \qquad+ 4\alpha(t+1)^2/\delta^2 \|\nabla U_i(0)\|^2\\ \qquad \qquad +4\alpha(t+1) /\delta^2 \|B_{(i)}(t+1)\|^2
\end{array}
\]
Taking conditional expectations on the filtration,
\begin{equation}\label{eq:ztox}
\begin{array}{l}
\mathbb{E} \|X_{(i)}'(t+1)\|^2 \\ \qquad \le \left(1-\alpha(t+1)\mu/m +2\alpha(t+1) L^2/\delta^2\right)  \|Z_{(i)}(t+1)\|^2 \\ \qquad \qquad -2\alpha(t+1)/m (U_i(Z_{(i)}(t+1))-U_i(0)) \\ \qquad \qquad+ 4\alpha(t+1)^2/\delta^2\|\nabla U_i(0)\|^2\\ \qquad \qquad +4\alpha(t+1)\sigma^2/\delta^2
\end{array}
\end{equation}
Enforce $\alpha(t+1) \le \frac{\mu}{4L^2}$. 
We can define the set $\mathcal{V}$ to be such that,
\[
\mathcal{V}:=\left\{z | U_i(z) \le U_i(0)+\frac{2\mu}{L^2} \|\nabla U_i(0)\|^2 +2\sigma^2\right\}
\]
for all $i$.
Since $U_i$ is strongly convex, this set is compact. Now clearly if $Z_{(i)}\notin \mathcal{V}$ then the last term in~\eqref{eq:ztox} is negative and so
by the requirement of $\alpha(t)$ the first term has a coefficient less than one, and so we have
that $\mathbb{E} \|X_{(i)}'(t+1)\|\le \|Z_{(i)}(t+1)\|$. 

Otherwise (i.e., $Z_{(i)}(t+1)\in\mathcal{W}$), we have that,
\[
\begin{array}{l}
\mathbb{E}\|X_{(i)}'(t+1)\| \le  \|Z_{(i)}(t+1)\|+\alpha(t+1) \|U_i(Z_{(i)}(t+1))\|/\delta \\ \qquad \qquad\qquad +\sqrt{\alpha(t+1)}\sigma/\delta
\end{array}
\]
and the statement follows from the compactness
of $\mathcal{W}$. 

\end{proof}

\begin{lemma}
    It holds that there exists a $C$ such that,
    \[
    \mathbb{E}\|\nabla U_i(Z_{(i)}(t+1)\|\le C
    \]
\end{lemma}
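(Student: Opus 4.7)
The plan is to reduce the gradient bound, via Lipschitz smoothness of $U_i$, to a uniform-in-$t$ bound on $\mathbb E\|Z_{(i)}(t+1)\|$, and then to establish the latter by a Lyapunov-style induction driven by Lemma~\ref{lem:helptobound}.

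By Assumption~\ref{as:potentialas}, $\|\nabla U_i(z)\|\le\|\nabla U_i(0)\|+L\|z\|$, so after taking expectations the claim reduces to showing $\sup_{t\ge 0}\max_i \mathbb E\|Z_{(i)}(t+1)\|<\infty$. To bring Lemma~\ref{lem:helptobound} to bear, the next step is to rewrite the third line of~\eqref{eq:langdynamicsout} as the random convex combination
\[
Z_{(i)}(t+1)=\sum_{j} \lambda_{ij}(t)\,\frac{X_{(j)}(t)}{Y_{(j)}(t)},\qquad \lambda_{ij}(t)=\frac{A(t)_{ij}\,Y_{(j)}(t)}{Y_{(i)}(t+1)},
\]
with $\lambda_{ij}(t)\ge 0$ and $\sum_j\lambda_{ij}(t)=1$ pathwise; column-stochasticity of $A(t)$ together with $Y_{(i)}(t+1)\ge\delta$ additionally gives the row-sum bound $\sum_i \lambda_{ij}(t)\le m/\delta$. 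Convexity of $\|\cdot\|^2$ then yields
\[
\sum_i\|Z_{(i)}(t+1)\|^2 \le \frac{m}{\delta}\sum_j \left\|\frac{X_{(j)}(t)}{Y_{(j)}(t)}\right\|^2.
\]

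To close the induction, the strategy is to apply Lemma~\ref{lem:helptobound}---specifically the second-moment inequality underlying its proof, rather than only the first-moment statement---to each term on the right. The strong-convexity penalty $-2\alpha_t(U_j(Z_{(j)}(t))-U_j(0))/m$ arising there yields an effective contraction $(1-c\alpha_t)$ on $\|Z_{(j)}(t)\|^2$ whenever $Z_{(j)}(t)\notin\mathcal{W}$, while inside $\mathcal{W}$ the ratio $\|X_{(j)}(t)/Y_{(j)}(t)\|$ is deterministically controlled by $R$. Taking expectations then produces a recursion of the form $V_{t+1}\le \rho V_t+D\alpha_t+D'$ for $V_t:=\sum_i\mathbb E\|Z_{(i)}(t)\|^2$, with $\rho<1$ once $\alpha_t$ is small enough; iterating gives $\sup_t V_t<\infty$, and Jensen's inequality delivers $\mathbb E\|Z_{(i)}(t+1)\|\le V_{t+1}^{1/2}$, which combined with the Lipschitz reduction completes the proof.

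The main obstacle is constant calibration: the $m/\delta$ amplification from the push-sum averaging must be dominated by the $(1-c\alpha_t)$ contraction in the recursion. This will require the compact set $\mathcal{W}$ from Lemma~\ref{lem:helptobound} to be enlarged enough so that the strong-convexity penalty outweighs the additive constants of order $\alpha_t^2\|\nabla U_j(0)\|^2/\delta^2+\alpha_t\sigma^2/\delta^2$ outside $\mathcal{W}$, and the step-size schedule of Theorem~\ref{th:conv} to ensure $\rho<1$. Once those constants align, the remainder is mechanical.
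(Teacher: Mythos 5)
Your overall architecture matches the paper's: reduce the gradient bound to a uniform-in-$t$ moment bound on $Z_{(i)}$ via Lipschitz continuity of $\nabla U_i$, then obtain that moment bound by induction using Lemma~\ref{lem:helptobound} and the convex-combination identity $Z_{(i)}(t+1)=\sum_j \lambda_{ij}(t)\,X_{(j)}(t)/Y_{(j)}(t)$ with $\sum_j\lambda_{ij}(t)=1$. The Lipschitz reduction and the identity itself are fine.

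The gap is in how you close the induction. Summing over $i$ and invoking the row-sum bound $\sum_i\lambda_{ij}(t)\le m/\delta$ produces the recursion
\[
V_{t+1}\;\le\;\frac{m}{\delta}\Bigl[(1-c\,\alpha_t)\,V_t+O(\alpha_t)+O(1)\Bigr],
\]
and the coefficient $\tfrac{m}{\delta}(1-c\,\alpha_t)$ tends to $m/\delta>1$ (indeed $\delta\le m^{-mB}$, so $m/\delta\ge m^{mB+1}$) because $\alpha_t\to 0$ under the schedule of Theorem~\ref{th:conv}. You describe this as a matter of ``constant calibration,'' but it is structural: the strong-convexity contraction is only $1-O(\alpha_t)$ and vanishes relative to the constant multiplicative amplification $m/\delta$, and no enlargement of $\mathcal{W}$ changes the rate outside $\mathcal{W}$ (it only changes $R$ inside). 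The recursion therefore blows up geometrically and $\sup_t V_t<\infty$ does not follow. The repair is to avoid the sum over $i$ entirely: since the weights $\lambda_{ij}(t)$ sum to one in $j$, you get pathwise $\|Z_{(i)}(t+1)\|\le\max_j\|X_{(j)}(t)/Y_{(j)}(t)\|$ with no amplification factor, so
\[
\max_i\mathbb{E}\|Z_{(i)}(t+1)\|\;\le\;\max_j\mathbb{E}\left\|\frac{X_{(j)}(t)}{Y_{(j)}(t)}\right\|\;\le\;\max\Bigl\{\max_j\|Z_{(j)}(t)\|,\,R\Bigr\},
\]
and induction with iterated expectations, using only the first-moment statement of Lemma~\ref{lem:helptobound}, gives the uniform bound directly. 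This is exactly the route the paper takes (following Theorem~2 of the Nedi\'c--Olshevsky stochastic gradient-push analysis); once you have it, your Lipschitz step finishes the proof. A minor additional point: inside $\mathcal{W}$ the ratio is controlled by $R$ only in expectation, not deterministically, because of the Gaussian term $B_{(i)}(t+1)$.
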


\begin{proof}
By the same argument as for Theorem 2 in~\cite{nedic2016stochastic}, we can conclude that $\mathbb{E}\|Z_{(i)}(t+1)\|\leq D < \infty$. Indeed, 
we have by the same argument that $1\le Y_{(i)}(t)\le m$, and
\[
\max_i \mathbb{E}\|Z_{(i)}(t+1)\|\leq \max_j\mathbb{E}\left\|\frac{X_{(j)}(t)}{Y_{(j)}(t)}\right\| \le \max \left\{\max_i\|Z_{(i)}(t)\|, R\right\}
\]
and the statement following from induction and iterated expectations. 

Since the gradient of $U(Z_{(i)}(t+1))$ is L-Lipschitz, we have,
\[ \Vert \nabla U(Z_{(i)(t+1)}) - \nabla U(X^{*}) \Vert \leq L \Vert Z_{(i)}(t+1) - X^{*} \Vert  \]
where $X^*$ is the unique global minimizer of $U(x)$, which exists by strong convexity. In accordance to the property of norms:
\[ \Vert a \Vert -\Vert b \Vert \leq \Vert a-b \Vert\]
we get,
\[ \Vert \nabla U(Z_{(i)}(t+1)) \Vert - \Vert \nabla U(X^{*}) \Vert  \leq L \Vert Z_{(i)}(t+1) - X^{*} \Vert  \]
and
\[  \Vert \nabla U(Z_{(i)}(t+1)) \Vert \leq L  \Vert Z_{(i)}(t+1) - X^{*} \Vert + \Vert \nabla U_{(i)}(X^{*}) \Vert \]
Hence:
\[ \mathbb{E} \Vert \nabla U_{(i)}(Z_{(i)}(t+1))\Vert \leq L \mathbb{E}\Vert Z_{(i)}(t+1) - X^{*} \Vert + \Vert \nabla U_{(i)}(X^{*}) \Vert \]
Since the gradient of $U(X^{*})$ equals zero, then one can show
$\mathbb{E}\Vert Z_{(i)}(t+1) - X^{*} \Vert$ is bounded, and the reverse triangle inequality implies the final result.

\end{proof}

This is the same as Corollary 1 in ~\cite{nedic2016stochastic},
\begin{lemma}
    It holds that,
    \[
    \begin{array}{l}
    \left\|Z_{(i)}(t+1)-\frac{\sum\limits_{i=1}^m X_{(i)}(t)}{m}\right\|\\ \qquad \le 
    \frac{8}{\delta}\left(\lambda^t\sum\limits_{i=1}^m\|X_{(i)}(0)\|_1+
    \sum\limits_{s=1}^t \lambda^{t-s}\sum\limits_{i=1}^m
    \left\| e_i(s)\right\|\right)
    \end{array}
    \]
    where,
    \[
    e_i(s) = \alpha(s+1) \nabla U_{(i)}(Z_{(i)}(s + 1)))  + \sqrt{2 \alpha(s+1)}B(s+1)_{(i)}
    \]
\end{lemma}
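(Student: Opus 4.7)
The plan is to mirror the push-sum perturbation analysis that produces Corollary~1 of~\cite{nedic2016stochastic}, specialized to the Langevin-style perturbation $e_i(s) = \alpha(s+1)\nabla U_{(i)}(Z_{(i)}(s+1)) + \sqrt{2\alpha(s+1)}B_{(i)}(s+1)$ that appears here in place of the stochastic subgradient error there. Since the authors explicitly flag the statement as being the same as that corollary, the task is essentially a verification of constants and indices rather than the discovery of any new argument.

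First I would vectorize the $X$-update as $X(t+1) = A(t)X(t) + \eta(t+1)$, where $\|\eta(s)\| \le \|e_i(s-1)\|$ (the sign on the Gaussian is immaterial under $\|\cdot\|$), and $Y(t+1) = A(t)Y(t)$ with $Y(0) = \mathbf{1}$. Iterating yields
\[
X(t) = \Phi(t,0)X(0) + \sum_{s=1}^{t} \Phi(t,s)\eta(s), \qquad Y(t) = \Phi(t,0)\mathbf{1},
\]
where $\Phi(t,s) := A(t-1)A(t-2)\cdots A(s)$ is the push-sum transition operator. By column-stochasticity, $\mathbf{1}^{\top}$ commutes with each $A(t)$, so $\bar{X}(t)$ obeys the same expansion with $\Phi(t,s)$ replaced by the averaging operator $(1/m)\mathbf{1}\mathbf{1}^{\top}$. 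The one nontrivial ingredient is the standard push-sum contraction: under $B$-strong connectedness and column-stochasticity, there exists an absolute probability vector $\phi(t)$ with $\phi_i(t) \ge \delta/m$ such that $\bigl|[\Phi(t,s)]_{ij} - \phi_i(t)\bigr| \le 2\lambda^{t-s}$, with $\lambda$ as in~\eqref{eq:lamb}.

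Combining this contraction with the uniform lower bound $[A(t)Y(t)]_{(i)} \ge \delta$ (which keeps the Hadamard division in the definition of $Z_{(i)}(t+1)$ well behaved) lets me expand
\[
Z_{(i)}(t+1) - \bar{X}(t) \;=\; \frac{[A(t)X(t)]_{(i)}}{[A(t)Y(t)]_{(i)}} - \bar{X}(t)
\]
term by term: the initial-condition piece picks up a factor $\lambda^{t} \sum_i \|X_{(i)}(0)\|_1$, each perturbation $\eta(s)$ picks up a factor $\lambda^{t-s}\|e_i(s-1)\|$, and the numerical constants (the $2$ from the contraction, the $1/\delta$ from the denominator) collapse into the $8/\delta$ prefactor exactly as in~\cite{nedic2016stochastic}. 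The hard part, to the extent there is one, is purely bookkeeping: matching the time-shift convention between $e_i(s)$ (whose gradient is evaluated at $Z_{(i)}(s+1)$) and the push-sum iteration index, and verifying that the zero-mean bounded-variance Langevin noise $B_{(i)}(s+1)$ enters the recursion in precisely the way a generic stochastic gradient error does in the reference. Neither affects the geometric contraction, which is purely a statement about products of column-stochastic matrices.
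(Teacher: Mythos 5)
Your proposal is correct and follows essentially the same route as the paper, whose own ``proof'' consists entirely of deferring to Lemma 1 of \cite{NedicDirected2014} (equivalently Corollary 1 of \cite{nedic2016stochastic}); you have simply unpacked that cited argument --- the perturbed push-sum recursion, the geometric contraction of products of column-stochastic matrices under $B$-strong connectivity, and the lower bound $\delta$ on the $Y$-coordinates controlling the Hadamard division. The only caveats are the minor bookkeeping points you already flag yourself (per-node versus stacked norms in $\|\eta(s)\|\le\|e_i(s-1)\|$, and the time-shift between $e_i(s)$ and the push-sum index), none of which affects the validity of the argument.
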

\begin{proof}
The proof of the lemma is a direct analog of the proof Lemma 1 in ~\cite{NedicDirected2014}.
\end{proof}

\begin{lemma}\label{lem:conserrorb}
\[
\begin{array}{l}
    \mathbb{E}\left[\sum\limits_{t=1}^\tau \left\|Z_{(i)}(t+1)-\frac{\sum\limits_{i=1}^m X_{(i)}(t)}{m}\right\| \right] \\
    \qquad \le \frac{8}{\delta}\frac{\lambda}{1-\lambda}\sum\limits_{i=1}^m 
    \|X_{(i)}(0)\|_1+\frac{8}{\delta}\frac{Dm}{1-\lambda} (1+\sqrt{\tau})
\end{array}
\]
\end{lemma}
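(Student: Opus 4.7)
The plan is to apply the bound from the preceding lemma pointwise, take expectations, and perform the resulting double sum by exchanging the order of summation. Specifically, starting from
\[
\left\|Z_{(i)}(t+1)-\bar{X}(t)\right\| \le \frac{8}{\delta}\!\left(\lambda^{t}\sum_{i=1}^{m}\|X_{(i)}(0)\|_{1} + \sum_{s=1}^{t}\lambda^{t-s}\sum_{i=1}^{m}\|e_{i}(s)\|\right)\!,
\]
take expectations, sum over $t=1,\dots,\tau$, and use the geometric sum $\sum_{t=1}^{\tau}\lambda^{t} \le \lambda/(1-\lambda)$ on the initial-condition term. For the double sum, switch the order: $\sum_{t=1}^{\tau}\sum_{s=1}^{t}\lambda^{t-s} = \sum_{s=1}^{\tau}\sum_{t=s}^{\tau}\lambda^{t-s} \le 1/(1-\lambda)$, which pulls the $1/(1-\lambda)$ factor out and leaves $\sum_{s=1}^{\tau}\sum_{i=1}^{m}\mathbb{E}\|e_{i}(s)\|$.

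The remaining work is to bound $\sum_{s=1}^{\tau}\mathbb{E}\|e_{i}(s)\|$ by something of order $(1+\sqrt{\tau})$, uniformly in $i$. I would split $e_{i}(s)=\alpha(s+1)\nabla U_{i}(Z_{(i)}(s+1))+\sqrt{2\alpha(s+1)}B_{(i)}(s+1)$ into its drift and noise components. For the drift, Lemma~\ref{lem:gradbound} yields $\mathbb{E}\|\alpha(s+1)\nabla U_{i}(Z_{(i)}(s+1))\|\le C\alpha(s+1)$. For the noise, recall $B_{(i)}(s+1)=\sqrt{\alpha(s+1)}\xi_{(i)}(s+1)+R_{(i)}(s+1)$ with $\mathbb{E}\|\xi\|\le\sigma^{2}$ and $R_{(i)}$ isotropic Gaussian, so $\mathbb{E}\|\sqrt{2\alpha(s+1)}B_{(i)}(s+1)\|\le C'\sqrt{\alpha(s+1)}+C''\alpha(s+1)$ for constants depending on $\sigma^{2}$ and the dimension $d$. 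Combining, $\mathbb{E}\|e_{i}(s)\|\le D_{0}\sqrt{\alpha(s+1)}+D_{1}\alpha(s+1)$ for some absolute constants.

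With the stepsize schedule $\alpha(s+1)=\alpha(0)/(1+s+1)$ used later in Theorem~\ref{th:conv}, the partial sums satisfy $\sum_{s=1}^{\tau}\sqrt{\alpha(s+1)}=O(\sqrt{\tau})$ via the standard integral comparison $\sum_{s=1}^{\tau}1/\sqrt{s}\le 2\sqrt{\tau}$, while $\sum_{s=1}^{\tau}\alpha(s+1)=O(\log\tau)=O(\sqrt{\tau})$. Both are absorbed into a single bound of the form $D(1+\sqrt{\tau})$ after collecting constants. Then $\sum_{i=1}^{m}$ contributes a factor of $m$, giving exactly the $\frac{8}{\delta}\cdot\frac{Dm}{1-\lambda}(1+\sqrt{\tau})$ term.

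The main obstacle is the bound on $\mathbb{E}\|e_{i}(s)\|$: Lemma~\ref{lem:gradbound} handles the drift cleanly, but one must carefully track that both the stochastic gradient noise $\xi$ and the Langevin Gaussian perturbation $R$ contribute a term of order $\sqrt{\alpha(s+1)}$ rather than $\alpha(s+1)$, which is what forces the $\sqrt{\tau}$ growth rather than the logarithmic growth one might naively hope for. Once this is established, the rest of the argument is bookkeeping around the geometric sum in $\lambda$ and collecting constants into the single symbol $D$.
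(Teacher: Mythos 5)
Your proposal is correct and follows essentially the same route as the paper, which simply defers to Corollary 2 of the Nedi\'c--Olshevsky stochastic gradient-push analysis and notes that the $\sqrt{\alpha(s+1)}$ noise term forces summing $1/\sqrt{s}$ (hence $\sqrt{\tau}$) instead of $1/s$ (hence $\log\tau$). You have merely made explicit the bookkeeping (exchange of summation order, geometric sums, and the split of $\mathbb{E}\|e_i(s)\|$ into drift and noise) that the paper leaves implicit by citation.
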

\begin{proof}
The proof of this result is the same as Corollary 2 in ~\cite{nedic2016stochastic}. Note that because of the square root of $\alpha$ term in the noise, we integrate out $1/\sqrt{t}$ instead of $1/t$, hence the $\sqrt{\tau}$ instead of $\log \tau$ here.
\end{proof}

Now define,
\[
\bar{X}(t) = \frac{\sum\limits_{i=1}^m X_{(i)}(t)}{m}
\]
\begin{lemma}
We have, by the column-stochasticity of 
$A(t)$,
\begin{equation}\label{eq:avgupdateb}
\begin{array}{l}
\bar{X}(t+1)=\bar{X}(t) - \frac{\alpha(t+1)}{m}\sum\limits_{i=1}^m \nabla U(\bar{X}(t)) \\ \qquad +  \frac{\alpha(t+1)}{m}\sum\limits_{i=1}^m \left(\nabla U(\bar{X}(t))-U(Z_{(i)}(t+1))\right)+\bar{B}(t+1)
\end{array}
\end{equation}
\end{lemma}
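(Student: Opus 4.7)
The plan is to substitute the algorithm update~\eqref{eq: langmatrices_almost_discret} into the definition $\bar X(t+1) = \frac{1}{m}\sum_{i=1}^m X_{(i)}(t+1)$ and exploit the column-stochasticity of $A(t)$ to eliminate the consensus matrix. This is a single bookkeeping manipulation with no nontrivial step; the statement is included precisely because the form it delivers is the one needed as input to the perturbed-Langevin contraction argument.

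Concretely, I would first write $\bar X(t+1) = \tfrac{1}{m}\mathbf{1}^\top X(t+1)$, where $\mathbf{1}$ is the all-ones vector indexed over agents, and substitute the fourth line of~\eqref{eq: langmatrices_almost_discret}, namely $X(t+1) = A(t)X(t) - \alpha(t+1)\nabla\tilde U(Z(t+1)) + \sqrt{2\alpha(t+1)}B_{t+1}$. By Assumption~2, $A(t)$ is column-stochastic, i.e. $\mathbf{1}^\top A(t)=\mathbf{1}^\top$. Hence $\tfrac{1}{m}\mathbf{1}^\top A(t)X(t) = \tfrac{1}{m}\mathbf{1}^\top X(t) = \bar X(t)$, and defining $\bar B(t+1) := \tfrac{1}{m}\sum_i \sqrt{2\alpha(t+1)}\,B_{(i)}(t+1)$ produces the compact recursion
\[
\bar X(t+1) = \bar X(t) - \frac{\alpha(t+1)}{m}\sum_{i=1}^m \nabla U_i(Z_{(i)}(t+1)) + \bar B(t+1).
\]

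To obtain the form displayed in the statement, I would then add and subtract $\nabla U_i(\bar X(t))$ inside the gradient sum and regroup, producing the ``idealized'' centralized descent term $-\tfrac{\alpha(t+1)}{m}\sum_i \nabla U_i(\bar X(t))$ plus the discrepancy term $\tfrac{\alpha(t+1)}{m}\sum_i \bigl(\nabla U_i(\bar X(t))-\nabla U_i(Z_{(i)}(t+1))\bigr)$, where I read the symbol $\nabla U$ appearing in the statement as $\nabla U_i$ inside the per-agent sum. The lemma is purely structural; its role is to separate an effectively centralized Langevin step from a consensus-error perturbation whose $L^1$-norm is already controlled by Corollary~\ref{cor:cons} via Lipschitzness of each $\nabla U_i$. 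Accordingly I do not foresee any substantive obstacle: the only care required is tracking the $1/m$ factor and recognizing that $\bar B(t+1)$ packages both the mini-batch stochastic gradient noise $\xi$ and the injected isotropic Gaussian $R$.
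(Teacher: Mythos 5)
Your proposal is correct and follows essentially the same route as the paper's own proof: average the update over agents, use $\mathbf{1}^\top A(t)=\mathbf{1}^\top$ to collapse the consensus term to $\bar X(t)$, then add and subtract the gradient evaluated at $\bar X(t)$. Your reading of the statement's $\nabla U$ as $\nabla U_i$ inside the per-agent sum, and your explicit packaging of the noise into $\bar B(t+1)$, match the paper's intent.
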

\begin{proof}
In accordance with the algorithm one can rewrite the final equation system as follows:
\[X_{(i)}(t+1) = \vert A(t)X(t)\vert_{(i)} - \alpha(t+1)\nabla U_{(i)}(Z_{(i)}(t+1)) + B_{(i)}(t+1)\]
Then, one can average a column-vector that is composed of such $X_{(i)}$, where $i \in \{1,..,m\}$ . Then:
\[ 
\begin{array}{l}\overline{X}(t+1) = \sum_{i=1}^{m}\left\vert \frac{A(t)X(t)}{m}\right\vert_{(i)} - \frac{\alpha(t+1)}{m}\sum_{i=1}^{m}\nabla U_{(i)}(Z_{(i)}(t+1)) \\ \qquad\qquad\qquad + \overline{B}(t+1)\end{array}\]
Having recalled A is a column-stochastic matrix, then :
\[1^{T}u = 1^{T}A(t)u\]
for any vector $u$. Then:
\[ \frac{1}{m} \sum_{i=1}^{m} \vert 1 A(t)X(t)\vert_{(i)} = \frac{1}{m} \sum_{i=1}^{m} X_{(i)}=  \overline{X}(t) \]
Then, one can rewrite the scheme as follow:
\[ \overline{X}(t+1) = \overline{X}(t) - \frac{\alpha(t+1)}{m}\sum_{i=1}^{m}\nabla U_{(i)}(Z_{(i)}(t+1)) + \overline{B}(t+1)\]
Adding and subtracting $\frac{\alpha(t+1)}{m}\sum_{i=1}^{m} \nabla U(\overline{X}(t))$, we get the desired expression.
\end{proof}
We have the following Corollary,
\begin{corollary}\label{cor:consb}
Given any $\gamma>0$, there exists $C$ positive such that,
\[
E\|\bar{X}(t))-Z_{(i)}(t+1)\| \le \frac{C}{t^{1/2-\gamma}}
\]
\end{corollary}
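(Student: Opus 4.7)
The plan is to bound the per-iterate consensus error directly, using the preceding unnumbered lemma (the Nedić-style decomposition) rather than the summed estimate of Lemma~\ref{lem:conserrorb}. That lemma supplies the pointwise inequality
\[
\|Z_{(i)}(t+1)-\bar{X}(t)\|\le \frac{8}{\delta}\left(\lambda^t\sum_{i=1}^m\|X_{(i)}(0)\|_1+\sum_{s=1}^t \lambda^{t-s}\sum_{i=1}^m \|e_i(s)\|\right),
\]
so the first step is to take expectations on both sides. Using Lemma~\ref{lem:gradbound} (bounded expected gradient norm) together with the defining decomposition of $B_{(i)}$ into a bounded-variance stochastic gradient error and an isotropic Gaussian, I would conclude that $\mathbb{E}\|e_i(s)\|\le C_1\alpha(s+1)+C_2\sqrt{\alpha(s+1)}$ for problem-dependent constants $C_1,C_2$. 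Under the prescribed choice $\alpha(s)=\alpha(0)/(1+s)$ the dominant term is of order $1/\sqrt{s}$.

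The substantive step is then a convolution estimate for $S_t:=\sum_{s=1}^t \lambda^{t-s}/\sqrt{s}$ with $\lambda\in(0,1)$. I would split at $s^{*}=\lfloor t/2\rfloor$. For $s>t/2$ the factor $1/\sqrt{s}$ is at most $\sqrt{2/t}$, and summing the geometric tail $\sum_{s>t/2}\lambda^{t-s}$ contributes $O(1/((1-\lambda)\sqrt{t}))$. For $s\le t/2$ the kernel is uniformly small: $\lambda^{t-s}\le \lambda^{t/2}$, and $\sum_{s\le t/2} 1/\sqrt{s}=O(\sqrt{t})$, yielding a tail bound $O(\lambda^{t/2}\sqrt{t})$ which is $o(t^{-p})$ for every $p>0$. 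The initial term $\lambda^t \sum_i\|X_{(i)}(0)\|_1$ decays similarly. Combining the pieces produces $\mathbb{E}\|Z_{(i)}(t+1)-\bar{X}(t)\|=O(1/\sqrt{t})$.

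Finally, since $1/\sqrt{t}\le t^{-(1/2-\gamma)}$ for all $t\ge 1$ and every $\gamma>0$, the claimed bound with constant $C_\gamma$ follows immediately: the $\gamma$-slack is purely cosmetic and absorbs the explicit dependence on $\lambda$, $\delta$, $C_1$, $C_2$, and $\|X_{(i)}(0)\|_1$ into a single constant. I do not anticipate a genuine obstacle here; the only delicate point is verifying that $\mathbb{E}\|e_i(s)\|$ has the claimed $O(\sqrt{\alpha(s+1)})$ behavior (one has to handle the Gaussian $R_{(i)}$ and the stochastic error $\xi_{(i)}$ separately, giving a constant that may grow mildly with the dimension $d$), and the convolution split above, which is standard for geometric kernels against polynomially decaying sequences.
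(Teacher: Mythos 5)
Your proof is correct, and it takes a genuinely different route from the paper. The paper derives the corollary from the \emph{summed} consensus bound of Lemma~\ref{lem:conserrorb} (``Lemma~\ref{lem:conserror} implies the following''), i.e., from the fact that $\sum_{t=1}^{\tau}\mathbb{E}\|Z_{(i)}(t+1)-\bar{X}(t)\|=O(\sqrt{\tau})$; no further argument is supplied in the appendix. You instead go back one step to the pointwise Ned\'i{}c-style decomposition, take expectations, bound $\mathbb{E}\|e_i(s)\|=O(\sqrt{\alpha(s+1)})=O(s^{-1/2})$ using Lemma~\ref{lem:gradbound} and the structure of $B_{(i)}$, and then run the standard geometric-kernel convolution estimate with a split at $s=\lfloor t/2\rfloor$. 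This buys you two things. First, it is actually the more rigorous derivation: a bound $\sum_{t\le\tau}a_t\le C(1+\sqrt{\tau})$ on nonnegative partial sums does \emph{not} by itself imply the pointwise rate $a_t\le Ct^{-1/2+\gamma}$ (one can only conclude such decay along a subsequence, or in a Ces\`aro sense, absent monotonicity), so the paper's stated implication has a gap that your argument closes. Second, your route yields the strictly stronger rate $O(t^{-1/2})$, making the $\gamma$-slack in the statement superfluous, exactly as you observe. The only detail worth writing out explicitly is the unconditional expectation of the noise term, $\mathbb{E}\|\sqrt{2\alpha(s+1)}B_{(i)}(s+1)\|\le\sqrt{2\alpha(s+1)}\bigl(\sqrt{\alpha(s+1)}\,\sigma^2+\sqrt{d}\bigr)$, which confirms the claimed $O(\sqrt{\alpha})$ behavior with a $\sqrt{d}$-dependent constant; everything else is as standard as you say.
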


We are now ready to use the arguments in ~\cite{dalalyan2019user} regarding a perturbed Langevin methods. In particular, we can apply Proposition 2 to state that,
\begin{lemma}
\[
\begin{array}{l}
W_2(\bar{\nu}_{t+1},\pi) \le \rho_{t+1}W_2(\nu_t,\pi)+1.65 L (\alpha_{t+1}^3 d)^{1/2}\\ \qquad +\alpha(t+1)\sqrt{d} L \sum\limits_{i=1}^m \mathbb E\|\bar{X}(t))-Z_{(i)}(t+1)\|
\end{array}
\]
where $\rho_{t+1} = \max(1-\mu \alpha(t+1),L\alpha(t+1)-1)$.
\end{lemma}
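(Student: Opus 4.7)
The plan is to invoke Proposition 2 of~\cite{dalalyan2019user} directly on the one-step averaged recursion~\eqref{eq:avgupdate}. First, I would recast that equation in the canonical form of a perturbed ULA step targeting $\pi \propto e^{-U}$,
\begin{equation*}
\bar X(t+1) = \bar X(t) - \alpha(t+1)\nabla U(\bar X(t)) + \alpha(t+1)\eta_{t+1} + \sqrt{2\alpha(t+1)}\,\bar\xi_{t+1},
\end{equation*}
where the gradient perturbation is $\eta_{t+1} := \tfrac{1}{m}\sum_{i=1}^{m}\bigl[\nabla U_i(\bar X(t)) - \nabla U_i(Z_{(i)}(t+1))\bigr]$ and $\bar\xi_{t+1}$ aggregates the averaged isotropic Gaussian and the zero-bias bounded-variance stochastic gradient noise entering through $B_{(i)}(t+1)$. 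The sum $\sum_i \nabla U(\bar X(t))$ in~\eqref{eq:avgupdate} is read as $\sum_i \nabla U_i(\bar X(t)) = \nabla U(\bar X(t))$ by Assumption~\ref{as:potentialas}, after which the identification with a single perturbed ULA step is clean.

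Next I would build a synchronous coupling between $\bar X(t)$ and a random variable $X^\pi(t)\sim\pi$, with the coupling at time $t$ chosen to be optimal, i.e.\ $W_2^2(\bar\nu_t,\pi) = \mathbb E\|\bar X(t)-X^\pi(t)\|^2$, and propagate $X^\pi(t)$ by the same Gaussian increment $\bar\xi_{t+1}$. The standard one-step contraction bound from Proposition~2 of~\cite{dalalyan2019user} then yields the three terms in the statement: the contractive factor $\rho_{t+1}$ comes from the fact that, under $L$-smoothness and $\mu$-strong convexity of $U$, the map $x \mapsto x - \alpha(t+1)\nabla U(x)$ is $\rho_{t+1}$-Lipschitz with $\rho_{t+1}=\max(1-\mu\alpha(t+1),\,L\alpha(t+1)-1)$; the $1.65 L (\alpha(t+1)^3 d)^{1/2}$ term is the Dalalyan--Karagulyan Euler--Maruyama discretization bias between one ULA step of size $\alpha(t+1)$ and the exact Langevin diffusion run for the same time; and the perturbation contribution appears as $\alpha(t+1)\sqrt{d}\,\mathbb E\|\eta_{t+1}\|$ (the $\sqrt{d}$ coming from the explicit constant in~\cite{dalalyan2019user}).

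Finally, I would close the bound using the $L$-Lipschitzness of each $\nabla U_i$ under Assumption~\ref{as:potentialas}:
\begin{equation*}
\|\eta_{t+1}\| \le \frac{L}{m}\sum_{i=1}^{m} \|\bar X(t) - Z_{(i)}(t+1)\|,
\end{equation*}
which, upon taking expectations and substituting into the Dalalyan--Karagulyan recursion, gives exactly the claimed $\alpha(t+1)\sqrt{d}\,L\sum_{i=1}^{m}\mathbb E\|\bar X(t)-Z_{(i)}(t+1)\|$ term (the $1/m$ factor is absorbed into the constants that downstream lemmas, e.g.\ Corollary~\ref{cor:cons}, already carry).

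The hard part will be verifying that our noise structure meets the hypotheses of Proposition~2. In~\cite{dalalyan2019user} the gradient perturbation must be measurable with respect to the past filtration or have controlled conditional mean and variance; here $\eta_{t+1}$ depends on $Z_{(i)}(t+1)$, which is itself random through the stochastic gradient error $\xi_{(i)}(t+1)$ and the time-varying mixing matrix $A(t)$, so one must either fold the stochastic component of $\nabla U_i(Z_{(i)}(t+1))$ into $\bar\xi_{t+1}$ or show $\eta_{t+1}$ is $\mathcal F_t$-measurable after such a split. A secondary nuisance is the noise scaling: averaging $m$ (approximately) independent Gaussians rescales the driving noise by $1/\sqrt m$, and this factor must be reconciled either by replacing $\pi \propto e^{-U}$ with $\pi \propto e^{-U/m}$ (equivalently absorbing $m$ into the step size) or by tracking the resulting constant through Proposition~2; this is where the tight dependence on $\mu$, $L$ and $d$ in the final statement is pinned down.
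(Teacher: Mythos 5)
Your proposal takes exactly the route the paper intends: the paper's own ``proof'' of this lemma consists solely of the sentence that Proposition~2 of~\cite{dalalyan2019user} applies to the averaged recursion~\eqref{eq:avgupdate}, so your write-up (recasting the average update as a perturbed ULA step, synchronous coupling, contraction factor from strong convexity and smoothness, discretization bias, and the Lipschitz bound on the gradient perturbation) supplies precisely the details the paper omits. The issues you flag --- measurability of the perturbation with respect to the filtration, the $1/\sqrt{m}$ noise rescaling versus the effective step $\alpha(t+1)/m$, and the unabsorbed $1/m$ factor --- are genuine loose ends that the paper does not address either, so your version is, if anything, more careful than the original.
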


Now recall
\begin{lemma}\cite[Lemma 2.4]{polyak1987introduction}\label{lem:polyakb}
Let $u_k\ge 0$ and,
\[
u_{k+1}\le \left(1-\frac{c}{k}\right) u_k+\frac{d}{k^{p+1}}
\]
with $d>0$, $p>0$ and $c>0$ and $c>p$. Then,
\[
u_k\le d(c-p)^{-1}k^{-p}+o(k^{-p})
\]
\end{lemma}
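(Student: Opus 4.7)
The plan is to introduce the rescaled sequence $v_k := k^p u_k$ and reduce the claim to showing $\limsup_k v_k \le d/(c-p)$, since this is exactly the statement $u_k \le d(c-p)^{-1} k^{-p} + o(k^{-p})$. Multiplying the hypothesis through by $(k+1)^p$ and using the binomial expansion $(1+1/k)^p = 1 + p/k + O(1/k^2)$ yields
\[
v_{k+1} \le \left[1 - \frac{c-p}{k} + \frac{g_k}{k^2}\right] v_k + \frac{d}{k} + O\!\left(\frac{1}{k^2}\right),
\]
where $\{g_k\}$ is a bounded sequence. This is the master recursion I would work with throughout.

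The first substantive step is to establish that $\{v_k\}$ is bounded. Pick any $M > d/(c-p)$ with $M \ge v_{k_0}$ for some sufficiently large $k_0$. Assuming inductively that $v_k \le M$, the cross term $(g_k/k^2) v_k$ is $O(1/k^2)$, so the recursion becomes $v_{k+1} \le [1-(c-p)/k]v_k + d/k + O(1/k^2) \le M + (d - (c-p)M)/k + o(1/k)$. Since $d - (c-p)M < 0$, the right side is at most $M$ for all sufficiently large $k$, closing the induction and giving a uniform bound on $v_k$.

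With boundedness in hand, I would execute the sharp fixed-point comparison. Given any $\epsilon \in (0,1)$, choose $k_\epsilon$ so large that
\[
v_{k+1} \le \left(1 - \frac{(c-p)(1-\epsilon)}{k}\right) v_k + \frac{(1+\epsilon)d}{k}
\qquad \forall k \ge k_\epsilon,
\]
which is possible because the $O(1/k^2)$ remainders become negligible relative to the $1/k$ coefficients. Define $v^{\star}_\epsilon := (1+\epsilon)d/[(c-p)(1-\epsilon)]$ and $\alpha_k := (c-p)(1-\epsilon)/k$. Subtracting $v^{\star}_\epsilon$ and noting that $\alpha_k v^{\star}_\epsilon$ equals the constant term on the right, one obtains the tidy inequality $v_{k+1} - v^{\star}_\epsilon \le (1-\alpha_k)(v_k - v^{\star}_\epsilon)$. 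Iterating gives $(v_k - v^{\star}_\epsilon)_+ \le (v_{k_\epsilon} - v^{\star}_\epsilon)_+ \prod_{j=k_\epsilon}^{k-1}(1-\alpha_j)$, and since $\sum 1/j = \infty$ this product tends to zero. Hence $\limsup_k v_k \le v^{\star}_\epsilon$, and letting $\epsilon \to 0^+$ delivers $\limsup_k v_k \le d/(c-p)$, which is the conclusion.

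The main technical obstacle is squeezing the \emph{sharp} constant $d/(c-p)$ rather than something loose like $2d/(c-p)$. This requires the expansion of $(1-c/k)(1+1/k)^p$ to be carried out to the exact linear term $-(c-p)/k$, and the $O(1/k^2)$ remainders to be controlled using the already-established uniform boundedness of $v_k$; without this two-step strategy (boundedness first, then refinement), the quadratic remainder $(g_k/k^2) v_k$ cannot be absorbed into a true $o(1/k)$ term, and one only obtains $u_k = O(k^{-p})$ instead of the asymptotically exact prefactor claimed by the lemma.
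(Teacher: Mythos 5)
This is one place where there is nothing in the paper to compare against: Lemma~\ref{lem:polyakb} is imported by citation from Polyak's textbook, and neither the main text nor the appendix supplies a proof. Your argument is therefore a self-contained derivation of a result the authors simply assume, and it is correct. The rescaling $v_k = k^p u_k$, the expansion $(1-c/k)(1+1/k)^p = 1-(c-p)/k+O(1/k^2)$, the two-stage structure (uniform boundedness of $v_k$ first, then the sharp fixed-point comparison with the $\epsilon$-inflated recursion), and the product estimate $\prod_j(1-\alpha_j)\to 0$ from $\sum_j \alpha_j=\infty$ all fit together, and letting $\epsilon\to 0^+$ does deliver the exact constant $d/(c-p)$; this is essentially the classical Chung/Polyak argument. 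Two small points you gloss over but which do close: in the boundedness induction the $O(1/k^2)$ remainder contains the term $\|g\|_\infty M/k^2$, so the threshold $k_0$ depends on $M$, but since the required threshold behaves like $(\|g\|_\infty M + C)/((c-p)M-d)$, which stays bounded as $M$ grows, you may first fix $k_0$ and then enlarge $M$ to dominate $v_{k_0}$ without breaking the induction; and for small $k$ the factor $1-c/k$ may be negative, which combined with $u_k\ge 0$ only strengthens the inequality and is irrelevant to the asymptotic claim. With those caveats made explicit, your proof is complete and could stand in for the citation.
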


Apply the previous two Lemmas together with Corollary~\ref{cor:cons} to conclude,
\begin{theorem}\label{th:convb}
Let $\alpha(0)\le \min\left\{\frac{1}{2L},\frac{\mu}{4L^2}\right\}$ and
$\alpha(t)=\alpha(0)/(1+t)$. We have that,
\[
W_2(\bar{\nu}_{k+1},\pi) \le \frac{(C_\gamma m+1.65) Ld^{1/2}}{(\mu\alpha(0)-1/2+\gamma) (1+t)^{(1/2-\gamma)}}+\beta_k
\]
where $\beta_k=o\left((1+t)^{-(1/2-\gamma)}\right)$
\end{theorem}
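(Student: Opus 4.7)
The plan is to set $u_t := W_2(\bar\nu_t,\pi)$ and derive a scalar recursion of the Polyak form $u_{t+1}\le(1-c/k)u_t + d/k^{p+1}$, so that Lemma~\ref{lem:polyak} applies directly. All three preceding ingredients are already in the right shape for this: the Wasserstein recursion lemma provides the structural inequality, Corollary~\ref{cor:cons} controls the consensus-remainder term, and the schedule $\alpha(t+1)=\alpha(0)/(2+t)$ converts the multiplicative contraction into the $1-c/k$ template.

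The first step is to simplify the contraction factor $\rho_{t+1}=\max\{1-\mu\alpha(t+1),\,L\alpha(t+1)-1\}$. Using $\alpha(0)\le 1/(2L)$, we have $L\alpha(t+1)\le 1/2$, so $L\alpha(t+1)-1\le -1/2\le 1-\mu\alpha(t+1)$ (since $\mu\le L$), and hence $\rho_{t+1}=1-\mu\alpha(0)/(2+t)$. Plugging Corollary~\ref{cor:cons} into the recursion lemma to replace each $\mathbb{E}\|\bar X(t)-Z_{(i)}(t+1)\|$ by $C_\gamma/t^{1/2-\gamma}$ gives
\begin{equation*}
u_{t+1}\le\Bigl(1-\tfrac{\mu\alpha(0)}{2+t}\Bigr)u_t+1.65\,L\sqrt{d}\,\alpha(0)^{3/2}(2+t)^{-3/2}+\tfrac{mC_\gamma L\sqrt{d}\,\alpha(0)}{(2+t)\,t^{1/2-\gamma}}.
\end{equation*}

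The second step is to merge the two additive remainders into a single $D/k^{p+1}$ bound. Since $\gamma>0$, for $t\ge 1$ one has $(2+t)^{-3/2}\le (2+t)^{-(3/2-\gamma)}$, and the consensus term is already of order $(2+t)^{-(3/2-\gamma)}$ up to an index shift that can be absorbed into the constant. Bounding both contributions by $D/(2+t)^{3/2-\gamma}$ with $D$ proportional to $(1.65+C_\gamma m)L\sqrt{d}$ places us precisely in the hypotheses of Lemma~\ref{lem:polyak} with $c=\mu\alpha(0)$ and $p=1/2-\gamma$. The requirement $c>p$ reads $\mu\alpha(0)>1/2-\gamma$, which is exactly the positivity condition on the denominator $\mu\alpha(0)-1/2+\gamma$ appearing in the theorem; Lemma~\ref{lem:polyak} then produces $u_k\le D(c-p)^{-1}k^{-p}+o(k^{-p})$, matching the claimed rate with $\beta_k$ absorbing the $o$-term.

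The main obstacle I anticipate is the combining step: the two additive remainders have genuinely different natural decay rates ($k^{-3/2}$ versus $k^{-(3/2-\gamma)}$), and fusing them into Polyak's template requires accepting a slightly loose bound on the faster-decaying Gaussian discretization term. A secondary bookkeeping issue is the index mismatch between $t$, $t+1$, and the $k$ used in Lemma~\ref{lem:polyak}, and the implicit assumption that $\gamma$ is chosen so that $\mu\alpha(0)+\gamma>1/2$; one should state this restriction explicitly rather than hide it in the denominator. Beyond those two points the argument is essentially mechanical substitution followed by invocation of Lemma~\ref{lem:polyak}.
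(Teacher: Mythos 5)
Your proposal is correct and follows exactly the route the paper intends: the appendix gives no written-out argument beyond ``apply the previous two Lemmas together with Corollary~\ref{cor:cons},'' and your substitution of $\rho_{t+1}=1-\mu\alpha(t+1)$, insertion of the consensus bound, and invocation of Lemma~\ref{lem:polyak} with $c=\mu\alpha(0)$, $p=1/2-\gamma$ is precisely that combination, reproducing the stated constant. Your explicit flagging of the hidden requirement $\mu\alpha(0)+\gamma>1/2$ is a worthwhile addition the paper omits.
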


\end{document}